\documentclass[10pt,journal]{IEEEtran}
\IEEEoverridecommandlockouts

\usepackage{cite}
\usepackage{amsmath}
\usepackage{amsthm}
\usepackage{amssymb}

\DeclareMathOperator*{\argmin}{arg\,min}
\usepackage{graphicx}
\usepackage{textcomp}
\usepackage{xcolor}
\usepackage{tabularray}
\usepackage{booktabs}
\usepackage{multirow}
\usepackage{makecell}

\usepackage{algorithm}
\usepackage[noend]{algpseudocode}
\makeatletter
\def\BState{\State\hskip-\ALG@thistlm}
\makeatother
%
%---------------------
\newcommand{\E}{\mathbb{E}}

\newtheorem{proposition}{Proposition}
\newtheorem{definition}{Definition}
\newtheorem{remark}{Remark}

\usepackage{matlab-prettifier}
\usepackage{color}
\usepackage{mathrsfs}
% enumerate item 
\usepackage[shortlabels]{enumitem}
\makeatletter
\def\BState{\State\hskip-\ALG@thistlm}
\makeatother
\def\BibTeX{{\rm B\kern-.05em{\sc i\kern-.025em b}\kern-.08em
    T\kern-.1667em\lower.7ex\hbox{E}\kern-.125emX}}
\usepackage{subcaption}

\begin{document}

\title{{\bf\Large Adaptive Incentive-Compatible Navigational Route Recommendations\\ in Urban Transportation Networks}
}

\author{Ya-Ting Yang, Haozhe~Lei, and Quanyan Zhu
\thanks{The Authors are with the Department of Electrical and Computer Engineering, New York University, Brooklyn, NY, 11201, USA; E-mail: {\tt\small \{yy4348, hl4155, qz494\}@nyu.edu}. Correspondence should be sent to YY.}%
\thanks{This work has been submitted to the IEEE for possible publication. Copyright may be transferred without notice, after which this version may no longer be accessible.}
}

\maketitle

\begin{abstract}
In urban transportation environments, drivers often encounter various path (route) options when navigating to their destinations. This emphasizes the importance of navigational recommendation systems (NRS), which simplify decision-making and reduce travel time for users while alleviating potential congestion for broader societal benefits. However, recommending the shortest path may cause the flash crowd effect, and system-optimal routes may not always align the preferences of human users, leading to non-compliance issues. It is also worth noting that universal NRS adoption is impractical. Therefore, in this study, we aim to address these challenges by proposing an incentive compatibility recommendation system from a game-theoretic perspective and accounts for non-user drivers with their own path choice behaviors. Additionally, recognizing the dynamic nature of traffic conditions and the unpredictability of accidents, this work introduces a dynamic NRS with parallel and random update schemes, enabling users to safely adapt to changing traffic conditions while ensuring optimal total travel time costs. The numerical studies indicate that the proposed parallel update scheme exhibits greater effectiveness in terms of user compliance, travel time reduction, and adaptability to the environment.
\end{abstract}

\begin{IEEEkeywords}
Intelligent transportation systems, route recommendation, incentive compatibility, user behaviors, game theory.
\end{IEEEkeywords}

\section{Introduction}
Harnessing the vast information available from modern wireless communication and Internet of Things (IoT) advancements \cite{lv2020ai,zantalis2019review}, coupled with the progress made in data science and artificial intelligence \cite{veres2019deep,haydari2020deep}, intelligent transportation systems (ITS) have gained substantial attention for their ability to effectively tackle traffic congestion and elevate driver experiences through transportation management. However, despite the availability of real-time traffic information, navigating through complex urban environments or coping with congestion remains a challenging task for individuals. The complex urban environment naturally offers numerous path options for individuals to consider \cite{van2016user}, further complicating the decision-making process. This is where navigational recommendation systems (NRS) step in, including but not limited to platforms like Google Maps or Apple Maps. By offering path recommendations, NRS can help simplify decision-making for its users, at the same time reducing user travel times and mitigating potential congestion \cite{8262884}.

One of the most straightforward recommendations provided by NRS is to suggest the shortest path \cite{knuth1977generalization} in terms of travel time from the original starting point to the desired destination, based on current traffic conditions \cite{sherali1998time}. However, this approach can lead to a phenomenon known as the ``flash crowd effect'' \cite{flash_crowd_effect} if a large number of users are recommended the same shortest path at the same time. The flash crowd effect occurs when a sudden surge of users follows the same recommendation, causing a significant increase in traffic volume along that particular path. Consequently, the recommended path may become congested, resulting in longer travel times than initially anticipated for the shortest path. To address this issue, NRS must consider the collective impact of other users when providing recommendations. Instead of solely optimizing individual travel times, NRS can aim to optimize system efficiency by minimizing the total travel time costs across the urban transportation network. However, while prioritizing system efficiency seems beneficial for overall traffic management, it may come at the expense of some users' preferences and utilities. This trade-off can lead to user non-compliance issues, as users may choose not to follow the recommended path if it does not align with their preferences or if they discover alternative paths with lower travel time costs. As a result, achieving system optimality becomes challenging if users do not comply with the recommendations. 

\begin{figure}
    \centering
    \includegraphics[width=3.45in]{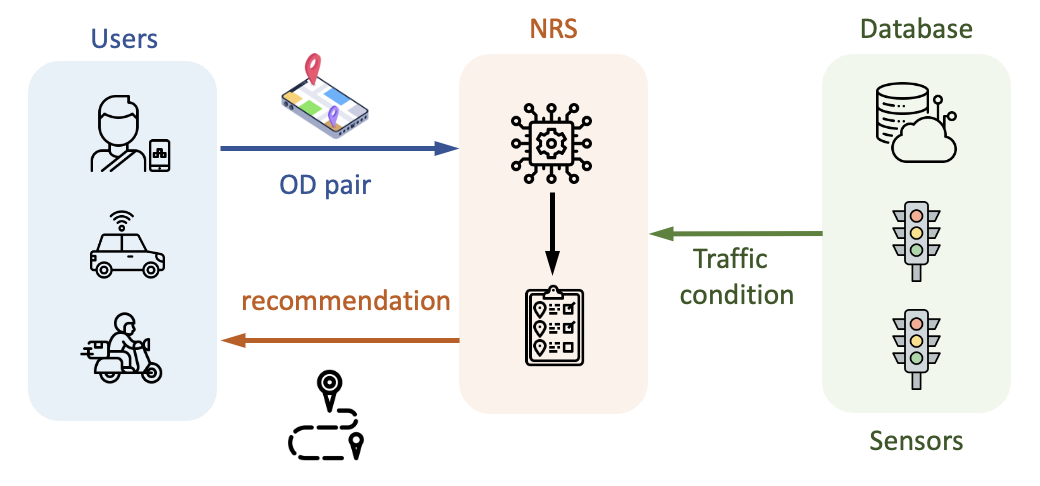}
    \caption{The process for navigational recommendations.}
    \label{fig:NRS_process}
\vspace{-3mm}
\end{figure}

Interacting directly with human users and being aware of potential non-compliance concerns, NRS must ensure that users lack incentives to unilaterally deviate from the recommendations, which coincides with the concept of Nash equilibrium. Consequently, this work suggests that a feasible recommendation provided by the NRS must satisfy the incentive constraints for the users. Furthermore, assuming universal adoption of a specific NRS among all individuals within the urban transportation network is impractical. Thus, this study also accounts for non-user drivers who navigate independently based on their own preferences and path choice behaviors.

Nevertheless, traffic conditions are dynamic and can change rapidly due to accidents, road closures, work zones for construction, weather conditions, and other factors. Therefore, merely planning before navigating on the road is not enough; the recommendations need to be updated according to the most recent traffic conditions. Thanks to recent advancements in edge computing and vehicle-to-everything (V2X) technologies, we can go one step further and consider a more dynamic navigational recommendation system (NRS with updated scheme). This study aims to utilize the edge computing framework, where the edge server (embedded in the road infrastructure) can collect localized data, exchange regional traffic conditions with other edge servers, and communicate with the local apps associated with NRS users. Then, the local NRS app can compute and provide the updated recommendation for its user. By incorporating real-time data and updates into navigational recommendations, users can receive more accurate and timely guidance to navigate efficiently and safely. Our contributions can be summarized as follows.
\begin{itemize}
    \item We introduce an incentive-compatible NRS from a game-theoretic perspective, considering both human user non-compliance issues and the behaviors of non-user drivers in the urban transportation network. Our objective is to guide users toward an optimal traffic equilibrium, ensuring that users have no incentive to deviate from the provided recommendations.
    \item We propose both parallel and random updated schemes for the NRS by leveraging V2X technology. It enables users to receive updates on navigational recommendations while driving, facilitating efficient navigation that helps users adapt to rapidly changing traffic conditions to reach their destinations promptly.
\end{itemize}

\noindent\textbf{Paper Organization:} The rest of this paper is organized as follows. Section 2 provides a brief overview of recent research applicable to path planning and navigational recommendations. Section 3 introduces the proposed navigational recommendation system, which takes user compliance and non-user behaviors into account. Section 4 explores a dynamic navigational recommendation system that adapts guidance based on evolving traffic conditions. Section 5 discusses the performance and paradox using numerical experiments under different scenarios. Finally, Section 6 concludes the paper.
 
\section{Literature Review}
%%% TODO
In intelligent transportation systems, the emergence of smart mobile applications, including traffic information alerts, path planning tools, and navigational recommendations, has potential benefits for elevating user experiences, improving traffic management, and addressing safety issues \cite{siuhi2016opportunities}. Specifically, path planning and navigational recommendation applications are designed to assist travelers in efficiently navigating urban transportation networks, offering recommendations for determining the most effective path from their original starting point to their target destination. 

One aspect of recommendation systems often focuses on individually optimizing recommendations for each user by suggesting the shortest paths according to current traffic conditions \cite{knuth1977generalization,sherali1998time}. However, such individualized (selfish) recommendations tend to overlook the collective impact on other users, potentially leading to the flash crowd effect \cite{flash_crowd_effect}. That is, if a substantial number of users simultaneously navigate to the same shortest path, it can result in a surge of traffic volume along that path, increasing the overall congestion levels across the urban transportation network. This scenario embodies a classic example of the ``Tragedy of the Common'' in economics, wherein the pursuit of individual utilities inadvertently undermines the common good \cite{van2016user}, i.e., the shortest path in navigational recommendations. 

Alternatively, recommendation systems may prioritize overall system efficiency by minimizing the total travel time cost over the network. However, this approach can result in disparities where some users experience longer travel times compared to others traveling from the same origin to the same destination. This discrepancy raises concerns regarding user compliance and fairness. The challenge then lies in promoting socially responsible behavior among users, motivating them to accept recommendations that may entail higher travel costs in exchange for alleviating congestion for the collective benefit. 

On the contrary, recommendation systems can adopt the concept of user equilibrium, as seen in mixed-strategy or pure-strategy atomic routing games \cite{du2014distributed, du2015coordinated}. In such equilibrium scenarios, fairness among users is guaranteed, as individuals with the same origin and destination experience identical (minimum) travel times once equilibrium is reached. Following such a case, the user will not have incentives to deviate from the recommendation; thus, user compliance can be ensured. Several recent studies in equilibrium routing also share similar insights while incorporating machine learning techniques to improve traffic prediction or users' route choice behaviors. For example, \cite{mahajan2019design} utilizes neural networks for link travel time prediction while \cite{user_eqm_rl} uses the reinforcement learning method in which users learn to choose the optimal route based on their past experiences. However, the main drawback of a user equilibrium recommendation is the inability to minimize the total travel time across the entire transportation system.

Therefore, many research efforts aim to bridge the gap between system efficiency and user equilibrium \cite{VANESSEN2016527,morandi2023bridging}. They strive to minimize overall system costs associated with individually optimizing or user equilibrium recommendations to a certain desired level. Within this domain, two main research trends have emerged: one focuses on minimizing overall system costs while accounting for user preferences or selfish choice behavior constraints, and the other strategically designs or perturbs the information received by users to align user preferences with system efficiency \cite{8262884}. For instance, \cite{ning2023robust} aims to achieve system optimality while considering heterogeneous user preferences in choice behaviors and cooperation willingness; \cite{spana2021strategic} ensures that users' selfish route choices also contribute to improving system performance through strategically perturbed traffic information.

It is important to note that the recommendations to users in the intelligent transportation systems are not the same as either routing in the network \cite{Multihoming,zhang2010optimizing} systems or routing in transportation networks for connected autonomous vehicles \cite{rossi2018routing}. This differentiation stems from several key factors. Firstly, human users retain the freedom to not follow the recommendations. Secondly, not all drivers within urban transportation networks utilize navigational recommendation systems. Last but not least, the dynamic nature of traffic conditions introduces variability over time, necessitating adaptability beyond mere planning. Hence, our study aims to go one step further from routing. We explore human-centric considerations such as users' compliance with recommendations and the choice behavior of non-user drivers. Our proposed approach involves a comprehensive NRS planning framework designed to guide users toward a mixed Nash equilibrium. Furthermore, we tackle the challenge that comes from evolving traffic conditions by introducing a dynamic navigational recommendation system capable of updating recommendations based on changing traffic dynamics.

\section{The Navigational Recommendation System}\label{sec:RS_model}

In this section, we first establish a mathematical model for navigational recommendation systems (NRS) in urban transportation environments, including but not limited to popular platforms like Google Maps or Apple Maps in our daily lives. 
In addition, it is crucial to recognize that the NRS in urban transportation networks needs to be approached differently from routing in network systems, as NRS often involves user interaction, including but not limited to preferences, experiences, habits, etc. for example, drivers may favor more familiar paths, choose to deviate from the suggested paths due to congestion experienced during specific times in their past journeys or opt for alternative paths to avoid traffic lights along the recommended paths. 

\subsection{The System Model}
One of the essential components of NRS is the urban transportation network, represented by $\mathcal{G}=\{\mathcal{V}, \mathcal{E}\}$. In this context, the set of nodes $\mathcal{V}$ corresponds to intersections, while the set of edges $\mathcal{E}$ indicates the roads. Traveling along a road $e\in \mathcal{E}$ will incur a cost $c_e: \mathbb{R}_{\ge 0} \mapsto \mathbb{R}_{+}$ associated with the expected flow $f_e \in \mathbb{R}_{\ge 0}$ on that road $e$. One usual choice for the cost function $c_e(\cdot)$ is the travel time cost $$c_e(f_e)=t_e\left(1+\eta\left(\frac{f_e}{k_e}\right)^\zeta\right)$$ provided by the standard Bureau of Public Roads (BPR) function. Here, $t_e \in \mathbb{R}_{+}$ represents the free-flow travel time on road $e$, $k_e \in \mathbb{R}_{+}$ signifies the capacity of road $e$, and $\eta, \zeta \in \mathbb{R}_{\ge 0}$ are some parameters.

The user set of NRS is denoted as $\mathcal{U}$. Each user, $u \in \mathcal{U}$, is associated with a specific origin $O_u \in \mathcal{V}$ and destination $D_u \in \mathcal{V}$ pair. We refer to the pair as an OD pair, expressed by $\theta_u = (O_u, D_u)$, and the set of OD pairs for the NRS users is $\Theta_{\mathcal{U}} \subset \Theta$, with $\Theta =  |\mathcal{V}| \times |\mathcal{V}|$. Then, user $u$ with OD pair $\theta_u$ has the feasible path choice set $\mathcal{S}_u = \{s_{u, 1}, \cdots, s_{u, k_u}\}$. Each choice $s_{u, i} \in \mathcal{S}_u$ provides the user $u$ a path from the origin to the desired destination. 
%To this end, the factors taken into account by the NRS can be represented as $\mathscr{G}=\left\langle \  \mathcal{G}, (c_e(\cdot))_{e \in \mathcal{E}}, \mathcal{U}, (\mathcal{S}_u)_{u \in \mathcal{U}} \ \right\rangle$.

This study explores the scenario where the NRS recommends a mixed strategy over feasible path choices to the users.
Define $\mathcal{P}_u:=\Delta \mathcal{S}_u$ as the simplex of $\mathcal{S}_u$ and  $\mathcal{P}:=\Pi_{u \in \mathcal{U}}\mathcal{P}_u$. A mixed strategy for user $u$ is $\textbf{P}_u \in \mathcal{P}_u$ so that $\textbf{P}_u=\{p_{u, i}\}_{i=1, \cdots, k_u}$ is a probability distribution over $\mathcal{S}_u$. Each $p_{u, i}\in[0,1]$ denotes the probability that the NRS recommends path $s_{u, i} \in \mathcal{S}_u$ to user $u$, subject to the constraints
$
    \sum_{i=1}^{k_u}p_{u, i}=1 , \ \forall u \in \mathcal{U}.
$ That is, $$\mathcal{P}_u := \left\{\textbf{P}_{u} \in \mathbb{R}^{k_u} \middle| p_{u,i}\geq 0, i=1, \cdots, k_u , \sum^{k_u}_{i=1}p_{u,i}=1 \right\},$$ which is compact and convex. Then, the recommendations suggested by the NRS to all users is $\textbf{P}=\{\textbf{P}_u\}_{u \in \mathcal{U}} \in \mathcal{P}$.

In transportation, from a microscopic perspective, the probability $p_{u, i}$ can be interpreted as the expected volume generated by user $u$ along path $s_{u, i}$. This, in turn, contributes to the expected flow (load) $f_e^r: \mathcal{P} \mapsto \mathbb{R}_{\ge 0}$ on edge $e \in \mathcal{E}$ as below. 
\begin{equation*}
    f_e^r(\textbf{P})=\sum_{u \in \mathcal{U}}\sum_{s_{u, i} \in \mathcal{S}_u}p_{u, i}\mathbf{1}_{\{e \in s_{u, i}\}},
\end{equation*} with the indicator function $\mathbf{1}_{\{e \in s_{u, i}\}}$ equal to $1$ if road $e$ is part of path $s_{u, i}$ and $0$ otherwise. Therefore, a generalized travel cost $C_{u, i}: \mathcal{P} \mapsto \mathbb{R}_{+}$ for user $u$ associated with path $s_{u, i}$ can be formulated by summing the costs of all the edges along the path:
\begin{equation*}
    C_{u, i}(\textbf{P})=\sum_{e \in s_{u, i}}c_e(f_e^r).
\end{equation*}

However, it is impractical to assume that all drivers on the roads are users of the NRS. In this context, those drivers who do not utilize the NRS are referred to as \textit{drivers}. While these drivers do not actively seek recommendations from the NRS, their behaviors still impact traffic conditions, influencing the flow on each road $e \in \mathcal{E}$. Thus, from a more comprehensive perspective, the NRS needs to incorporate the behaviors of these non-user drivers.

We denote the set of drivers who do not use the NRS as $\Bar{\mathcal{U}}$. Similar to the users, each driver $\Bar{u} \in \Bar{\mathcal{U}}$ is associated with an origin $O_{\Bar{u}} \in \mathcal{V}$ and destination $D_{\Bar{u}} \in \mathcal{V}$ pair, leading to a set of feasible paths $\mathcal{S}_{\Bar{u}} = \{s_{\Bar{u}, 1}, \cdots, s_{\Bar{u}, k_{\Bar{u}}}\}$. Without loss of generality, this work assumes that the stochastic choice behavior of these drivers can be modeled by the Multinomial Logit (MNL) model \cite{daskin1985urban}. The MNL model is widely used in various fields, including economics, transportation \cite{lee2018comparison}, marketing, and social sciences, to analyze discrete choice behaviors among multiple alternatives. 
In the context of path choices, the model supposes that individuals make choices based on the value they associate with each available option. That is, the behavior (mixed strategy) of driver $\Bar{u}$ over the set of feasible path choices $\mathcal{S}_{\Bar{u}}$ is given by
\begin{equation}
p_{\Bar{u}, i}^o=\frac{e^{V_{\Bar{u}, i}}}{\sum_{i=1}^{k_{\Bar{u}}}e^{V_{\Bar{u}, i}}}, \ \forall s_{\Bar{u}, i} \in \mathcal{S}_{\Bar{u}},
\label{eq:MNL}
\end{equation} where $V_{\Bar{u}, i}=-\alpha_{\Bar{u}}-\beta_{\Bar{u}} C_{{\Bar{u}}, i}^o$ represents the driver's value of path $s_{\Bar{u}, i}$, with $C_{\Bar{u}, i}^o \in \mathbb R_{+}$ being the current (initial) cost (travel time) on $s_{\Bar{u}, i}$. Here, $\alpha_{\Bar{u}} \in \mathbb{R}$ and $\beta_{\Bar{u}} \in \mathbb{R}$ are driver $\Bar{u}$'s preference parameters. 
We denote $\textbf{P}_{o}=\{p_{\Bar{u}, i}^o\}_{\Bar{u} \in \Bar{\mathcal{U}}, i=1, \cdots, k_{\Bar{u}}}$ as the mixed strategy played by the drivers. Then, the expected flow induced by other drivers on the road $e \in \mathcal{E}$ is $f_e^o: \Pi_{\Bar{u}\in \Bar{\mathcal{U}}} \Delta \mathcal{S}_{\Bar{u}} \mapsto \mathbb{R}_{\ge 0}$, where
\begin{equation*}
    f_e^o(\textbf{P}_o)=\sum_{\Bar{u} \in \Bar{\mathcal{U}}}\sum_{i=1}^{k_{\Bar{u}}}p^o_{\Bar{u}, i}\mathbf{1}_{\{e \in s_{\Bar{u}, i}\}}.
\end{equation*} In situations where the NRS takes into account other drivers, the combined expected flow caused by both users and drivers on each road $e \in \mathcal{E}$ is $f_e: \mathcal{P} \times \Pi_{\Bar{u}\in \Bar{\mathcal{U}}} \Delta \mathcal{S}_{\Bar{u}} \mapsto \mathbb{R}_{\ge 0}$, where $f_e(\textbf{P};\textbf{P}_{o}) = f_e^r(\textbf{P}) +f_e^o(\textbf{P}_{o})$, leading to the travel time cost $$c_e(f_e(\textbf{P};\textbf{P}_{o}))=t_e\left(1+\eta\left(\frac{f_e^r(\textbf{P})+f_e^o(\textbf{P}_{o})}{k_e}\right)^\zeta\right).$$ Thus, the travel cost for user $u$ associated with path $s_{u, i}$ becomes $\Tilde{C}_{u, i}: \mathcal{P} \times \Pi_{\Bar{u}\in \Bar{\mathcal{U}}} \Delta \mathcal{S}_{\Bar{u}} \mapsto \mathbb{R}_{+}$, and can be expressed as  
\begin{equation*}
    \Tilde{C}_{u, i}(\textbf{P};\textbf{P}_{o})=\sum_{e \in s_{u, i}}c_e(f_e^r(\textbf{P})+f_e^o(\textbf{P}_{o})),
\end{equation*} where the additional expected flow $f_e^o(\textbf{P}_{o})$ resulting from other drivers' path choice behaviors $\textbf{P}^o$ does not change with the recommended mixed strategies $\textbf{P}$.
To this end, the elements considered in the context of urban transportation network and taken into account by the NRS can be encapsulated using the notation $\mathscr{R}=\left\langle \  \mathcal{G}, (c_e(\cdot))_{e \in \mathcal{E}}, \mathcal{U}, (\mathcal{S}_u)_{u \in \mathcal{U}}, \mathcal{\Bar{U}}, (\mathcal{S}_{\Bar{u}})_{\Bar{u} \in \mathcal{\Bar{U}}} \ \right\rangle$, and we call $\mathscr{R}$ the ``NRS component''. 
The notations are summarized in Table \ref{tab:notation}.

\begin{table}[h!]
\centering
\caption{List of Notations}
 \begin{tabular}{|c l|} 
 \hline 
 Notation & Description \\  
 \hline\hline 
  $\mathcal{G}$ & urban transportation network \\ 
  $\mathcal{V}$ & set of intersections \\
  $\mathcal{E}$ & set of roads \\
  $\Theta$ & set of all the OD pairs within the network\\
  $c_e(\cdot)$ & cost function on road $e \in \mathcal{E}$ \\[0.5ex]
  \hline 
  $\mathcal{U}$ & set of users of NRS \\
  $\Theta_{\mathcal{U}}$ & set of users' OD pairs\\
  $\theta_u$ & OD pair $(O_u, D_u)$ for user $u \in \mathcal{U}$\\
  $\mathcal{S}_u$ & set feasible paths for user $u \in \mathcal{U}$ \\
  $\mathcal{P}_u$ & defined as the set of probability distributions over $\mathcal{S}_u$\\
  $\textbf{P}_u$ & recommended mixed strategy to user $u$, $\{p_{u, i}\}_{i=1, \cdots, k_u}$\\
  $\textbf{P}_{-u}$ & recommendations to other users except $u$\\
  $\mathcal{P}$ & defined as $\Pi_{u \in \mathcal{U}}\mathcal{P}_u$\\
  $\textbf{P}$ & recommendations suggested by the NRS to all users\\
  $\mathcal{\Bar{U}}$ & set of other drivers \\
  $\theta_{\Bar{u}}$ & OD pair $(O_{\Bar{u}}, D_{\Bar{u}})$ for driver $\Bar{u} \in \mathcal{\Bar{U}}$\\
  $\mathcal{S}_{\Bar{u}}$ & set feasible paths for driver $\Bar{u} \in \mathcal{\Bar{U}}$\\
  $\textbf{P}_{o}$ & mixed strategy played by drivers, $\{p_{\Bar{u}, i}^o\}_{\Bar{u} \in \Bar{\mathcal{U}}, i=1, \cdots, k_{\Bar{u}}}$\\
  $f_e^r(\cdot)$ & expected flow (load) from users on road $e \in \mathcal{E}$\\
  $f_e^o(\cdot)$ & expected flow (load) from drivers on road $e \in \mathcal{E}$\\
  $C_{u, i}(\cdot)$ & cost function for path $s_{u, i} \in \mathcal{S}_u$\\
  $\Tilde{C}_{u, i}(\cdot)$ & cost function for path $s_{u, i} \in \mathcal{S}_u$ considering other drivers\\
  $\mathscr{R}$ & NRS component\\
  $F_u(\cdot)$ & expected cost for user $u$, considering users other than $u$\\
  $\Tilde{F}_u(\cdot)$ & expected cost for user $u$, considering traffic conditions\\
  \hline
 \end{tabular}
\label{tab:notation}
\vspace{-5mm}
\end{table}

\subsection{Feasible Recommendations}
Navigational recommendations involve \textit{human users}, distinct from machines, computer programs, or protocols in network routing. Human users may choose not to adhere to the recommendation from the NRS if they find a better or more favorable alternative. Therefore, the recommended mixed strategy that the user will accept and follow needs to satisfy the following incentive-compatibility (IC) constraints, in which a user $u \in \mathcal{U}$
must prefer the recommended mixed strategy $\textbf{P}_u \in \mathcal{P}_u$. 
\begin{definition}[Incentive Compatibility (IC)]
    Considering an NRS component denoted as $\mathscr{R}$, a recommended mixed strategy $\textbf{P}_u \in \mathcal{P}_u$ for a user $u$ is called incentive compatible if the expected cost associated with the recommendation is the lowest, i.e., $ \forall \ \textbf{P}^{\prime}_{u} \in \mathcal{P}_u,$
    \begin{equation*}
        \sum^{k_u}_{i=1} p_{u,i}\Tilde{C}_{u,i}(\textbf{P}_{u},\textbf{P}_{-u}; \textbf{P}_o)-p^{\prime}_{u,i}\Tilde{C}_{u,i}(\textbf{P}^{\prime}_{u},\textbf{P}_{-u};\textbf{P}_o) \leq 0.
    \end{equation*}
\label{def:IC}
\vspace{-5mm}
\end{definition}
Note that in the definition, the recommendation to all users $\textbf{P} \in \mathcal{P}$ can be written as $\{\textbf{P}_{u}, \textbf{P}_{-u}\}$. Here, $\textbf{P}_{u} \in \mathcal{P}_u$ signifies the recommendation to user $u$, $\textbf{P}_{-u} \in \Pi_{u^{\prime} \in \mathcal{U}\setminus{u}}\mathcal{P}_{u^{\prime}}$ represents the recommendations to other users except $u$, and $\textbf{P}^{\prime}_{u}=\{p^{\prime}_{u,i}\}_{s_{u, i} \in \mathcal{S}_u} \in \mathcal{P}_u$ is other possible mixed strategy for user $u$ to deviate.
With Definition \ref{def:IC}, the NRS aims to find the \textit{feasible} recommendations to all the users, ensuring their adherence to the recommended strategies.
\begin{definition}[Feasible Recommendation]
    Considering an NRS component denoted as $\mathscr{R}$, a feasible recommendation profile to all the users $\textbf{P} \in \mathcal{P}$ needs to satisfy:
\begin{subequations}
  \begin{align}
    & \sum^{k_u}_{i=1}p_{u,i}\Tilde{C}_{u,i}(\textbf{P}_{u},\textbf{P}_{-u}; \textbf{P}_o)-p^{\prime}_{u,i}\Tilde{C}_{u,i}(\textbf{P}^{\prime}_{u},\textbf{P}_{-u}; \textbf{P}_o)\nonumber\\ 
    &\qquad \leq 0, 
      \ \forall \ \textbf{P}^{\prime}_{u} \in \mathcal{P}_u, \forall u \in \mathcal{U}, \label{eq:IC_cons}\\
    &\sum^{k_u}_{i=1}p_{u,i}=1, \forall u \in \mathcal{U}, \label{eq:prob_sum}\\
    & p_{u,i}\geq 0, \ \forall s_{u, i} \in \mathcal{S}_u, \forall u \in \mathcal{U}.\label{eq:prob_geq0}
  \end{align}
\label{prob:RS}
\end{subequations} 
\label{def:NRS}
\vspace{-5mm}
\end{definition}
The constraints given by \eqref{eq:prob_sum} and \eqref{eq:prob_geq0} guarantee that \( \textbf{P}_u \) constitutes a valid mixed strategy for all \( u \in \mathcal{U} \), also denoted as \( \textbf{P}_u \in \mathcal{P}_u, \ \forall u \in \mathcal{U} \). 
With Definition \ref{def:NRS}, it is worth noting that the NRS recommends mixed strategies that coincide with the concept of Nash equilibrium (NE), indicating that every user lacks an incentive to unilaterally deviate from the recommended mixed strategy. The NRS takes into account the choice behavior of the users as a group and creates coordinated incentive-compatible recommendations, differing from the approaches that recommend the system efficiency path or the shortest path to each user.

\subsection{A Game-Theoretic Interpretation}

The NRS's problem of finding feasible recommended mixed strategies for all users can also be interpreted using a non-cooperative game. With the NRS component $\mathscr{R}$ and the expected cost $F_u: \mathcal{P} \times \Pi_{\Bar{u}\in \Bar{\mathcal{U}}} \Delta \mathcal{S}_{\Bar{u}} \mapsto \mathbb{R}_{\ge 0}$ of user $u$, where
\begin{equation}
    F_u(\textbf{P}_{u}, \textbf{P}_{-u}; \textbf{P}_o):=\sum^{k_u}_{i=1}p_{u,i}\Tilde{C}_{u,i}(\textbf{P}_{u},\textbf{P}_{-u}; \textbf{P}_o),
\label{eq:F_u}
\end{equation} a non-cooperative game $\Gamma$ is defined as $\Gamma:=(\mathscr{R}, (F_u)_{u \in \mathcal{U}})$. Each user $u \in \mathcal{U}$ of the NRS is a player of the game $\Gamma$ who aims to minimize his/her own expected cost by deciding a mixed strategy $\textbf{P}_u \in \mathcal{P}_u$ over feasible path choice set $\mathcal{S}_u$ given other users' strategies $\textbf{P}_{-u}$. That is, user $u$ in the game $\Gamma$ plays the best response to other users' strategies $\textbf{P}_{-u}$.
\begin{equation}
\begin{aligned}
    \text{OP}_u: \min_{\textbf{P}_{u}^{\prime}} \ &F_u(\textbf{P}_{u}^{\prime}, \textbf{P}_{-u}; \textbf{P}_o)\\
    \text{s.t. } 
    & \textbf{P}_u^{\prime} \in \mathcal{P}_u.
\end{aligned}
\label{eq:OP_u} 
\end{equation}

Then, the solution concept that coincides with Problem \eqref{prob:RS} in Definition \ref{def:NRS} is the mixed-strategy Nash equilibrium, where no player can reduce his/her own expected cost by deviating from the recommended mixed strategy.
\begin{definition}[Nash Equilibrium (NE) of Game $\Gamma$]
    Consider a non-cooperative game $\Gamma=(\mathscr{R}, (F_u)_{u \in \mathcal{U}})$, the mixed-strategy profile $\textbf{P}^* \in \mathcal{P}$ is a Nash equilibrium of the game if it satisfies
    \begin{equation}
        F_u(\textbf{P}_{u}^*, \textbf{P}_{-u}^*; \textbf{P}_o) \leq F_u(\textbf{P}_{u}^{\prime}, \textbf{P}_{-u}^*; \textbf{P}_o), \ \forall \textbf{P}_{u}^{\prime} \in \mathcal{P}_u, \ \forall u \in \mathcal{U}.
    \label{eq:VE1}
    \end{equation}
\label{def:NE}
\vspace{-5mm}
\end{definition} That is, $\textbf{P}^*$ constitutes an NE if, for all $u \in \mathcal{U}$, the strategy $\textbf{P}_u^*$ optimally solves the individual optimization problem $\text{OP}_u$ when all other users adopt their equilibrium strategies $\textbf{P}_{-u}^*$.
Since problems $\text{OP}_u, \ \forall u \in \mathcal{U}$ are constrained optimization problems, we can use projected gradient descent (PGD) \cite{boyd2004convex} or other optimization techniques to solve them. The objective of $\text{OP}_u$ under the equilibrium strategy profile $\textbf{P}^*$, denoted as $F_u(\textbf{P}_{u}^*, \textbf{P}_{-u}^*; \textbf{P}_o)$, is referred to as the ``NRS value'' of user $u$.

\begin{proposition}
    The feasible set of recommendations specified in Definition \ref{def:NRS} is equivalent to the set of NE of the game $\Gamma$, defined in Definition \ref{def:NE}. 
    % An NE recommendation is also an optimal solution to the problem of $\text{OP}_u, \forall u \in \mathcal{U}$.
\label{prop:2to4}
\end{proposition}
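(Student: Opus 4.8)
The plan is to prove the set equality by a direct two-way inclusion, unfolding the definition of the per-user expected cost $F_u$ in \eqref{eq:F_u}. First I would fix an arbitrary recommendation profile $\textbf{P}=\{\textbf{P}_u,\textbf{P}_{-u}\}\in\mathcal{P}$ and note that the simplex conditions \eqref{eq:prob_sum}--\eqref{eq:prob_geq0} in Definition \ref{def:NRS} are, by construction, exactly the statement $\textbf{P}_u\in\mathcal{P}_u$ for every $u\in\mathcal{U}$, which is likewise part of the requirement for $\textbf{P}^*$ to be an NE in Definition \ref{def:NE}. Hence it suffices to show that, for a profile in $\mathcal{P}$, the incentive-compatibility constraint \eqref{eq:IC_cons} holds for all $u$ if and only if the equilibrium inequality \eqref{eq:VE1} holds for all $u$.

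For the direction ``feasible $\Rightarrow$ NE'', I would take the left-hand side of \eqref{eq:IC_cons} for a fixed $u$ and split the finite sum over $i=1,\dots,k_u$ into two sums. The first, $\sum_{i=1}^{k_u}p_{u,i}\tilde{C}_{u,i}(\textbf{P}_u,\textbf{P}_{-u};\textbf{P}_o)$, is by \eqref{eq:F_u} precisely $F_u(\textbf{P}_u,\textbf{P}_{-u};\textbf{P}_o)$; the second, $\sum_{i=1}^{k_u}p'_{u,i}\tilde{C}_{u,i}(\textbf{P}'_u,\textbf{P}_{-u};\textbf{P}_o)$, is the same functional form evaluated at the deviation $\textbf{P}'_u$ and therefore equals $F_u(\textbf{P}'_u,\textbf{P}_{-u};\textbf{P}_o)$. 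Thus \eqref{eq:IC_cons} reads $F_u(\textbf{P}_u,\textbf{P}_{-u};\textbf{P}_o)-F_u(\textbf{P}'_u,\textbf{P}_{-u};\textbf{P}_o)\le 0$ for all $\textbf{P}'_u\in\mathcal{P}_u$, i.e.\ $\textbf{P}_u$ solves $\text{OP}_u$ given $\textbf{P}_{-u}$, which is exactly \eqref{eq:VE1} with $\textbf{P}$ in the role of $\textbf{P}^*$. The direction ``NE $\Rightarrow$ feasible'' is the same chain of equalities read in reverse: from an NE $\textbf{P}^*$, simplex membership gives \eqref{eq:prob_sum}--\eqref{eq:prob_geq0}, and substituting \eqref{eq:F_u} into \eqref{eq:VE1} recovers \eqref{eq:IC_cons} verbatim for every $u$. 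Both inclusions then give the claimed equality.

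I do not expect a genuine obstacle: the proposition is essentially a bookkeeping identity between two equivalent phrasings of ``every user best-responds.'' The one point that warrants explicit care is the argument-matching inside the second summand of \eqref{eq:IC_cons}: one must verify that $\sum_{i=1}^{k_u}p'_{u,i}\tilde{C}_{u,i}(\textbf{P}'_u,\textbf{P}_{-u};\textbf{P}_o)$ is genuinely $F_u(\textbf{P}'_u,\textbf{P}_{-u};\textbf{P}_o)$ — with the deviated strategy $\textbf{P}'_u$ entering consistently both as the mixing weights $p'_{u,i}$ and as the first argument of $\tilde{C}_{u,i}$ — rather than some hybrid object in which the weights and the path costs are induced by different strategies. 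Once the definition \eqref{eq:F_u} is applied consistently this is immediate. As a secondary remark I would add that the statement asserts only equality of the two sets, not their nonemptiness; existence of a feasible recommendation (equivalently, of an NE of $\Gamma$) is a separate matter addressed elsewhere in the paper.
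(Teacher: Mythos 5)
Your proposal is correct and takes essentially the same route as the paper: both establish the equality by two-way inclusion, with the key observation being that the incentive-compatibility constraint \eqref{eq:IC_cons} is, after substituting the definition of $F_u$ in \eqref{eq:F_u}, literally the Nash-equilibrium inequality \eqref{eq:VE1}, and that \eqref{eq:prob_sum}--\eqref{eq:prob_geq0} encode $\textbf{P}\in\mathcal{P}$. The only difference is cosmetic --- the paper phrases each inclusion as a short contradiction argument while you verify the identity directly --- and your explicit check that the deviation $\textbf{P}'_u$ enters both the weights and the cost arguments consistently is a worthwhile point of care.
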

\begin{proof}
    The equivalence is proved by showing each of the two sets is a subset of the other set. Denote the set of recommendations described by \eqref{prob:RS} as $\mathcal{R}$ and the set specified by \eqref{eq:VE1} as $\mathcal{N}$. Noting that constraints \eqref{eq:prob_sum} and \eqref{eq:prob_geq0} are equivalent to $ \textbf{P} \in \mathcal{P}$, we begin by proving that given $\mathcal{N}$, every element of $\mathcal{R}$ is also in $\mathcal{N}$ ($\mathcal{R}$ is a subset of $\mathcal{N}$). This can be shown by contradiction. Suppose there exists an element $\textbf{P}$ in $\mathcal{R}$ that does not satisfy \eqref{eq:VE1}, with the expected cost specified in \eqref{eq:F_u}. This implies that at least one user $u$ can unilaterally deviate from the recommended $\textbf{P}_u$ to reduce the expected cost, contradicting the assumption that $\textbf{P}$ is an element of $\mathcal{R}$ where each $\textbf{P}_u$ minimizes the expected cost of user $u$ given $\textbf{P}_{-u}$. Then, we prove that given the set of $\mathcal{R}$, every element of $\mathcal{N}$ is also in $\mathcal{R}$. This is also be shown by contradiction. If there exists an element $\textbf{P}'$ of $\mathcal{N}$ that does not satisfy \eqref{prob:RS}, it means that there exists at least one user $u$ whose expected cost can be further reduced, contradicting the assumption that $\textbf{P}$ is an NE that users can not be better-off through unilateral deviation. These steps complete the proof.
    % With the expected cost specified in \eqref{eq:F_u}, problem \eqref{prob:RS} is the same as \eqref{eq:VE1},  which indicates that $\textbf{P}^*$ constitutes the NE if, $\forall u \in \mathcal{U}$, the strategy $\textbf{P}_u^*$ optimally solves the individual optimization problem $\text{OP}_u$ when all other users adopt their equilibrium strategies $\textbf{P}_{-u}^*$. This completes the proof.
\end{proof}

\subsection{Existence of Feasible Recommendation}

We use the road-path incidence matrix $A_{|\mathcal{E}|\times |\Pi_{u \in \mathcal{U}} \mathcal{S}_u|}=[a_{es_{u, i}}]$ to depict the relationship between roads and paths. Each element is defined as follows.
$$ a_{es_{u, i}}=
\begin{cases}
    1 \qquad \text{if} \ e \in s_{u, i},\\
    0 \qquad \text{otherwise}.
\end{cases}
$$ The expected flow (load) caused by all users on the road $e \in \mathcal{E}$ can also be expressed as
\begin{equation*}
    f_e^r(\textbf{P})=\sum_{u \in \mathcal{U}}\sum_{s_{u, i} \in \mathcal{S}_u}p_{u, i}a_{es_{u, i}},
\end{equation*} which consists of two parts: $f_e^u(\textbf{P}_u) = \sum_{s_{u, i} \in \mathcal{S}_u}p_{u, i}a_{es_{u, i}}$ resulting from user $u$, and $f_e^{-u}(\textbf{P}_{-u})=f_e^r-f_e^u$ representing the part contributed by users other than $u$. Let the aggregated flow resulting from users other than $u$ and drivers $$\hat{f}_e^{-u}(\textbf{P}_{-u};\textbf{P}_o)=f_e^{-u}(\textbf{P}_{-u})+f_e^o(\textbf{P}_o), \forall e \in \mathcal{E}.$$ If we consider the BPR function for the cost $c_e(\cdot), \forall e \in \mathcal{E}$, then the expected cost for user $u$ is expressed as:
\begin{equation*}
\begin{aligned}
    & F_u(\textbf{P}_{u}, \textbf{P}_{-u}; \textbf{P}_o) \\
& =\sum_{i=1}^{k_u} p_{u, i}\sum_{e \in s_{u, i}}c_e(f_e^u(\textbf{P}_u)+f_e^{-u}(\textbf{P}_{-u})+f_e^o(\textbf{P}_{o}))\\
& =\sum_{i=1}^{k_u} p_{u, i} \sum_{e \in s_{u, i}} t_e\left[1+\eta\left(\frac{\hat{f}_e^{-u}(\textbf{P}_{-u};\textbf{P}_o)+f_e^u(\textbf{P}_u)}{k_e}\right)^\zeta \right],
\end{aligned}
\end{equation*} For simplicity, the value of $\hat{f}_e^{-u}(\textbf{P}_{-u};\textbf{P}_o)$ is represented by $\hat{f}_e^{-u}$ for subsequent analyses. Then, we can compute the gradient of $F_u$ with respect to each element $p_{u, i}$ of $\textbf{P}_{u}$. By noting the fact that $\partial f_e^u/\partial p_{u, i}=a_{es_{u, i}}$ and $\partial \hat{f}_e^{-u}/\partial p_{u, i}=0$, we have each 
\begin{equation*}
\begin{aligned}
    \frac{\partial F_u}{\partial p_{u, i}} &= \Tilde{C}_{u, i}(\textbf{P}_u, \textbf{P}_{-u};\textbf{P}_o)+\sum_{j=1}^{k_u}p_{u, j}\frac{\partial \Tilde{C}_{u, j}(\textbf{P}_u, \textbf{P}_{-u};\textbf{P}_o)}{\partial p_{u, i}}\\
    & =\sum_{e \in \mathcal{E}} a_{es_{u, i}} t_e\bigg[1+\eta\bigg(\frac{\hat{f}_e^{-u}+f_e^u(\textbf{P}_u)}{k_e}\bigg)^\zeta\\
    & \qquad \qquad \qquad + \zeta f_e^u(\textbf{P}_u) \frac{\eta}{k_e}\bigg(\frac{\hat{f}_e^{-u}+f_e^u(\textbf{P}_u)}{k_e}\bigg)^{\zeta-1}\bigg].\\
\end{aligned}
\end{equation*} The partial derivative $\partial F_u/\partial p_{u, i}$ depends on $\textbf{P}_u$ as well as the aggregated flow $\hat{f}_e^{-u}$ resulting from all other users and drivers.
\begin{equation*}
\begin{aligned}
    \frac{\partial F_u}{\partial p_{u', i}} &=\sum_{j=1}^{k_{u}}p_{u, j}\frac{\partial \Tilde{C}_{u, j}(\textbf{P}_u, \textbf{P}_{-u};\textbf{P}_o)}{\partial p_{u', i}}\\
    & = \sum_{e \in \mathcal{E}} a_{es_{u', i}} t_e \bigg[ \zeta f_e^u(\textbf{P}_u) \frac{\eta}{k_e}  \bigg(\frac{\hat{f}_e^{-u}+f_e^u(\textbf{P}_u)}{k_e}\bigg)^{\zeta-1}\bigg].
\end{aligned}
\end{equation*}
%Note that the partial derivatives $\partial F_u/\partial p_{u, i}$ are continuous in $\textbf{P}_u \in \mathcal{P}_u$ and that $\partial F_u/\partial p_{u', i}$ are continuous in $\textbf{P}_{-u} \in \Pi_{u^{\prime} \in \mathcal{U}\setminus{u}}\mathcal{P}_{u^{\prime}}$. 
Therefore, the expected cost $F_u$ for user $u$ is continuously differentiable in $\textbf{P}=\{\textbf{P}_{u}, \textbf{P}_{-u}\} \in \mathcal{P}$. 

\begin{figure*}
    \centering
    \includegraphics[width=5.4in]{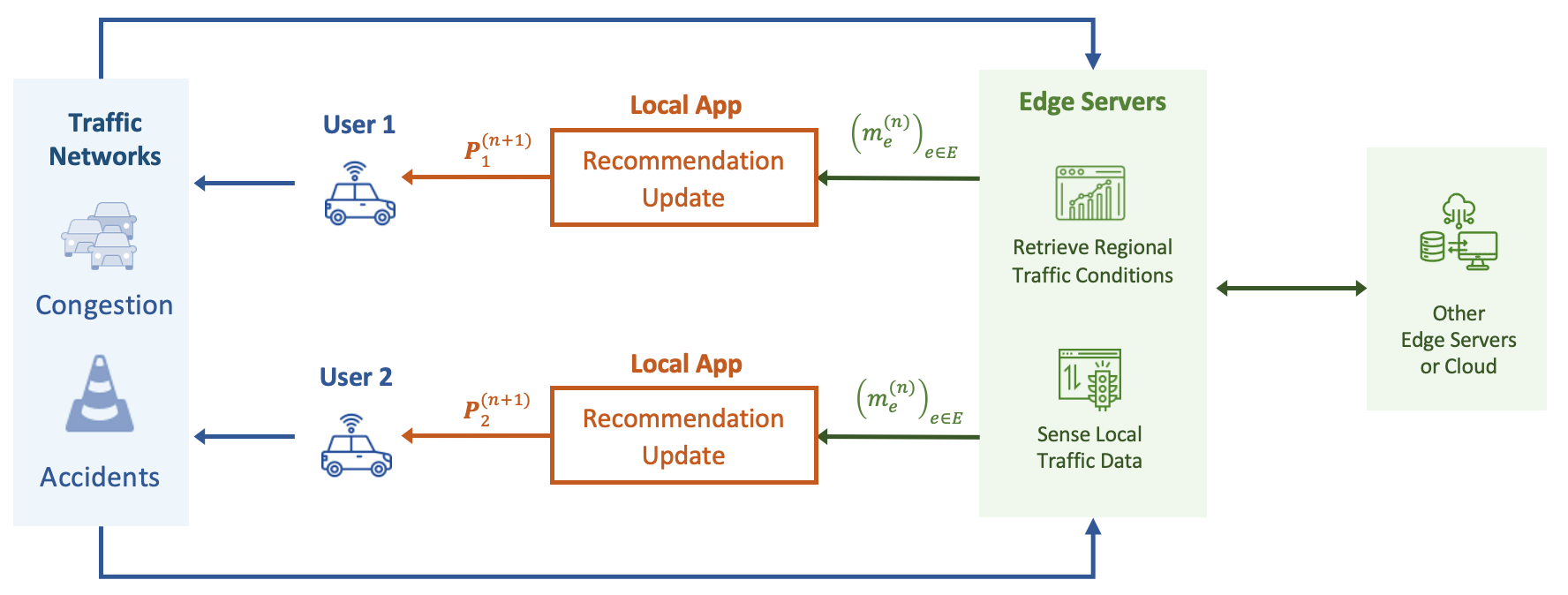}
    \caption{The (parallel) updated scheme by utilizing the V2X technology, such as edge server in the road infrastructure. The edge server is capable of gathering traffic data from its nearby roads and exchanging regional traffic conditions with either the cloud or other edge servers. Then, the local NRS app associated with each user computes the updated recommendation based on current traffic conditions provided by the edge server.}
    \label{fig:update_workflow}
\vspace{-1mm}
\end{figure*}

To check the convexity of $F_u$ in $\textbf{P}_u$, the Hessian matrix $H^{F_u}$ of $F_u$ needs to be positive semi-definite for all $\textbf{P}_u \in \mathcal{P}_u$. Note that each diagonal entry $h^{F_u}_{ii}$ in the $i$-th row and $i$-th column is
\begin{equation*}
\begin{aligned}
    h^{F_u}_{ii}&=\frac{\partial^2 F_u}{\partial p_{u, i}\partial p_{u, i}}\\
    &= \eta \sum_{e\in \mathcal{E}}  a_{es_{u, i}} \bigg[2 \zeta  \frac{t_e}{k_e} \bigg(\frac{\hat{f}_e^{-u}+f_e^u(\textbf{P}_u)}{k_e}\bigg)^{\zeta-1}\\
    & \quad + \zeta(\zeta - 1)f_e^u(\textbf{P}_u) \frac{t_e}{k_e^2}\bigg(\frac{\hat{f}_e^{-u}+f_e^u(\textbf{P}_u)}{k_e}\bigg)^{\zeta-2}\bigg] \geq 0,
\end{aligned}
\end{equation*} and each non-diagonal entry $h^{F_u}_{ij}$ in the $i$-th row and $j$-th column is
\begin{equation*}
\begin{aligned}
    h^{F_u}_{ij}& =\frac{\partial^2 F_u}{\partial p_{u, i}\partial p_{u, j}} \\
    &= \eta \sum_{e\in \mathcal{E}}  a_{es_{u, i}} a_{es_{u, j}} \bigg[2 \zeta  \frac{t_e}{k_e} \bigg(\frac{\hat{f}_e^{-u}+f_e^u(\textbf{P}_u)}{k_e}\bigg)^{\zeta-1}\\
    & \quad + \zeta(\zeta - 1)f_e^u(\textbf{P}_u) \frac{t_e}{k_e^2}\bigg(\frac{\hat{f}_e^{-u}+f_e^u(\textbf{P}_u)}{k_e}\bigg)^{\zeta-2}\bigg] \geq 0.
\end{aligned}
\end{equation*} For the positive semi-definiteness, we need the notion of diagonal dominance, stating that 
$$
|h^{F_u}_{ii}| \geq \sum_{j \neq i} |h^{F_u}_{ij}|, \forall i.
$$ Then, a diagonally dominant square matrix with real non-negative diagonal entries is positive semi-definite.

When $\zeta=1$, we can observe that the diagonal entries $h^{F_u}_{ii}$ sum the road-specific non-negative value $t_e/k_e$ over the road $e$ contained in the path $s_{u, i}$, $e \in s_{u_i}$, while the non-diagonal entries $h^{F_u}_{ij}$ sum the value $t_e/k_e$ over the road $e$ that intersects path $s_{u, i}$ and $s_{u, j}$, $e \in (s_{u,i} \cap s_{u, j})$. That is, if $\sum_{i=1}^{k_u} a_{es_{u, i}}\leq 2, \ \forall e \in \mathcal{E}$, the matrix $H^{F_u}$ is diagonally dominant. It may require more detailed analyses for cases where $\zeta>1$.

According to \cite{scutari2010convex}, if $F_u$ is convex in $\textbf{P}_u, \forall u \in \mathcal{U}$, by denoting gradient (vector) of $F_u$ as $\nabla_u F_u = (\partial F_u/\partial p_{u, i})_{s_{u, i} \in \mathcal{S}_u}$ and gradient of $F_u$ at $\textbf{P}_u, \textbf{P}_{-u}, \textbf{P}_o$ as $\nabla_u F_u(\textbf{P}_u, \textbf{P}_{-u};\textbf{P}_o)$. Let $\nabla F(\textbf{P};\textbf{P}_o)=(\nabla_u F_u(\textbf{P}_u, \textbf{P}_{-u};\textbf{P}_o))_{u \in \mathcal{U}}$, and the solution concept (NE) in Definition \ref{def:NE} is equivalent to the variational inequality (VI) characterization $\text{VI}(\mathcal{P}, \nabla F(\textbf{P};\textbf{P}_o))$ in finding $\textbf{P}^* \in \mathcal{P}$ as follows:
\begin{equation}
     (\textbf{P}-\textbf{P}^*)^\top \nabla F(\textbf{P}^*;\textbf{P}_o) \geq 0, \ \forall \textbf{P} \in \mathcal{P}.
\end{equation} Based on the existence results from VI (see (18) in \cite{scutari2010convex}), the game $\Gamma$ has a nonempty solution set, indicating that the set of feasible recommendations in Definition \ref{def:NRS} is nonempty.

\subsection{Projected Gradient Descent (PGD) Method} \label{sec:PGD}
Following Proposition \ref{prop:2to4}, we consider the projected gradient descent method in this section. The gradient step for user $u$ who aims to solve $\textbf{OP}_u$ given $\textbf{P}_{-u}$ from other users can be expressed as 
\begin{equation*}
    \textbf{q}_u^{(n+1)} = \textbf{P}_u^{(n)} - \alpha_n \nabla_u F_u(\textbf{P}_u^{(n)}, \textbf{P}_{-u};\textbf{P}_o),
\end{equation*} and then project it onto the simplex $\mathcal{P}_u$ as
\begin{equation*}
    \textbf{P}_u^{(n+1)} = \argmin_{\textbf{P}_u \in \mathcal{P}_u} ||\textbf{P}_u - \textbf{q}_u^{(n+1)}|| = \text{proj}_{\mathcal{P}_u} \left[\textbf{q}_u^{(n+1)}\right].
\end{equation*} 
The minimizers of $\textbf{OP}_u$ (user $u$'s best response to $\textbf{P}_{-u}$) are fixed points $\textbf{P}_u^{\dagger} \in \mathcal{P}_u$ of the projected gradient update
\begin{equation*}
    \textbf{P}_u^{\dagger} = \text{proj}_{\mathcal{P}_u}\left[\textbf{P}_u^{\dagger} - \alpha \nabla_u F_u(\textbf{P}_u^{\dagger}, \textbf{P}_{-u};\textbf{P}_o)\right].
\end{equation*} To this end, according to Definition \ref{def:NE} and Proposition \ref{prop:2to4}, we know that an NE is equivalent to the concatenation of strategies $\textbf{P}_u^*$, $\forall u \in \mathcal{U}$, that optimally solve the individual optimization problem $\text{OP}_u$, $\forall u \in \mathcal{U}$, when all other users adopt their equilibrium strategies $\textbf{P}_{-u}^*$.
%Use $\hat{f}_e=f_e^{-u}+f_e^o = \sum_{u^{\prime} \in \mathcal{U}\setminus u}\sum_{l=1}^{k_{u^{\prime}}} p_{u^{\prime}, l}A_{es_{u^{\prime}, l}} + \sum_{\Bar{u} \in \Bar{\mathcal{U}}}\sum_{i=1}^{k_{\Bar{u}}}p^o_{\Bar{u}, i}\mathbf{1}_{\{e \in s_{\Bar{u}, i}\}}$ that comes from $\textbf{P}_{-u}$ and $\textbf{P}_o$, 
\begin{equation}
    \textbf{P}_u^{*} = \text{proj}_{\mathcal{P}_u} \left[\textbf{P}_u^{*} - \alpha \nabla_u F_u(\textbf{P}_u^{*}, \textbf{P}_{-u}^{*};\textbf{P}_o )\right], \ \forall u \in \mathcal{U}.
\label{eq:PGD}
\end{equation}
\begin{remark}
    According to the discussion above, the set of NE of the game $\Gamma$, as defined in Definition \ref{def:NE}, is equivalent to the set of fixed points of \eqref{eq:PGD}. Hence, we can find the NE by finding a fixed point to \eqref{eq:PGD}.
    % According to the discussion above, an NE of the game $\Gamma$, as defined in Definition \ref{def:NE}, is equivalent to a fixed point of \eqref{eq:PGD}.
    % The Nash Equilibrium (NE), as defined in Definition \ref{def:NE}, can be characterized as fixed points of the projected gradient update in \eqref{eq:PGD}.
\end{remark}

\section{The Updated Scheme}

With recent advances in vehicle-to-everything (V2X) technology and edge computing, each user (or the local app associated with each user) is capable of getting the current traffic conditions through edge servers embedded in the road infrastructure. Denote traffic conditions at time step $n$ as $\textbf{m}^{(n)} \in \mathbb{R}_{\ge 0}^{|\mathcal{E}|}$, with $\textbf{m}^{(n)}= (m_e^{(n)})_{e \in \mathcal{E}}$. Recall that each element, represented by $m_e^{(n)}$, comes from the resulting expected flow and define $M_e: \mathcal{P} \times \Pi_{\Bar{u}\in \Bar{\mathcal{U}}} \Delta \mathcal{S}_{\Bar{u}} \mapsto \mathbb{R}_{\ge 0}$ so that 
$$m_e^{(n)}=M_e(\textbf{P}^{(n)};\textbf{P}_o^{(n)})=\frac{f_e^u(\textbf{P}_u^{(n)})+\hat{f}_e(\textbf{P}_{-u}^{(n)};\textbf{P}_o^{(n)})}{k_e}.$$
After gathering the current traffic conditions $\textbf{m}^{(n)}$, user $u$ can utilize his/her navigational recommendation $\textbf{P}_u^{(n)}$ at time step $n$ to infer the conditions influenced by other users and drivers, denoted as $(\hat{f}_e(\textbf{P}_{-u}^{(n)};\textbf{P}_o^{(n)}))_{e \in \mathcal{E}}$. It is worth noting that the PGD method described in Section \ref{sec:PGD} for solving $\textbf{OP}_u$ requires user $u$ to have access to both the recommendations for other users $\textbf{P}_{-u}$, and the drivers' behaviors $\textbf{P}_{o}$. However, once the values of $\textbf{m}^{(n)}$ become available to users, local apps (or users) can directly utilize $\textbf{m}^{(n)}$ to update the recommendations. We will discuss two update schemes in the following subsections.

\subsection{Parallel Update Algorithm}
In this section, we explore a parallel update scheme where every user (or local app) synchronously adjusts their recommendation at each time step. The updated scheme is illustrated in Fig. \ref{fig:update_workflow}. By leveraging the traffic condition $\textbf{m}^{(n)}$, we can rewrite the cost function evaluated by user $u$ as $\Tilde{F}_u: \mathcal{P}_u \times \mathbb{R}_{\ge 0}^{|\mathcal{E}|} \mapsto \mathbb{R}_{+}$, where at time step $n$, 
$$\Tilde{F}_u(\textbf{P}_u^{(n)},\textbf{m}^{(n)})=\sum_{i=1}^{k_u} p_{u, i}^{(n)} \sum_{e \in \mathcal{E}}a_{es_{u,i}}t_e\left[1+\eta m_e^{\zeta (n)}\right].$$
Note that $\Tilde{F}_u(\textbf{P}_u^{(n)},\textbf{m}^{(n)})$ and $F_u(\textbf{P}_{u}^{(n)}, \textbf{P}_{-u}^{(n)}; \textbf{P}_o^{(n)})$ have identical value at the same $\textbf{P}_{u}^{(n)}, \textbf{P}_{-u}^{(n)}, \textbf{P}_o^{(n)}$.
Then, the gradient of $\Tilde{F}_u$ with respect to each element $p_{u, i}^{(n)}$ of $\textbf{P}_u^{(n)}$ becomes $\partial \Tilde{F}_u/\partial p_{u, i}^{(n)}=$
\begin{equation*}
\begin{aligned}
    &\sum_{e \in \mathcal{E}} a_{es_{u, i}} t_e\left[1+\eta\Big(m_e^{\zeta (n)} + \zeta  \frac{f_e^u(\textbf{P}_u^{(n)})}{k_e} m_e^{\zeta-1 (n)}\Big)\right].
\end{aligned}
\end{equation*}
Let $\nabla_u \Tilde{F}_u =(\partial \Tilde{F}_u/\partial p_{u, i}^{(n)})_{s_{u, i} \in \mathcal{S}_u}$ and denote gradient of $\Tilde{F}_u$ at $\textbf{P}_u^{(n)}, \textbf{m}^{(n)}$ as $\nabla_u \Tilde{F}_u(\textbf{P}_u^{(n)}, \textbf{m}^{(n)})$, starting with some initial point for all users, $\textbf{P}^{(0)} \in \mathcal{P}$, the updated recommendation at time step $n+1$ is, therefore, $\forall u \in \mathcal{U}$,
\begin{equation}
    \textbf{P}_u^{(n+1)} = \text{proj}_{\mathcal{P}_u}\left[\textbf{P}_u^{(n)}-\alpha_n \nabla_u \Tilde{F}_u(\textbf{P}_u^{(n)}, \textbf{m}^{(n)})\right].
\label{eq:update_P_u}
\end{equation} By denoting $\nabla \Tilde{F}(\textbf{P}^{(n)},\textbf{m}^{(n)})=(\nabla_u \Tilde{F}_u(\textbf{P}_u^{(n)}, \textbf{m}^{(n)}))_{u \in \mathcal{U}}$ and concatenating \eqref{eq:update_P_u} together, we can get:
\begin{equation}
    \textbf{P}^{(n+1)} = \text{proj}_{\mathcal{P}}\left[\textbf{P}^{(n)}-\alpha_n \nabla \Tilde{F}(\textbf{P}^{(n)}, \textbf{m}^{(n)})\right].
\label{eq:update_P}
\end{equation}  It is straightforward to see that the $u$-th component of $\textbf{P}^{(n+1)}$ given by \eqref{eq:update_P} is equivalent to the one in \eqref{eq:update_P_u}. 

As shown in Fig. \ref{fig:update_workflow}, the edge server senses local traffic data from nearby roads and exchange regional traffic conditions with other edge servers (or get regional data from the cloud); the local app obtains current traffic conditions, perform gradient updates according to \eqref{eq:update_P_u}, and then provide the recommendation to the user.
We will use the form stated in \eqref{eq:update_P} for convergence analysis.
\begin{proposition}
    Under the conditions that for all $u \in \mathcal{U}$, the expected cost $F_u$ is continuously differentiable in $\textbf{P} \in \mathcal{P}$ and convex in $\textbf{P}_u \in \mathcal{P}_u$, the sequence of recommendations $\{\textbf{P}^{(n)}\}$, or equivalently the strategy profile of users in the game $\Gamma$, generated by \eqref{eq:update_P} converges to an NE, $\textbf{P}^{*}$, defined in Definition \ref{def:NE}.
\label{prop:converge}
\end{proposition}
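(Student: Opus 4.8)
The plan is to recognize the parallel update \eqref{eq:update_P} as the gradient-projection algorithm for the variational inequality that characterizes the Nash equilibria of $\Gamma$, and then to run the classical Fej\'{e}r-monotonicity argument powered by monotonicity and Lipschitz continuity of the joint pseudo-gradient on the compact, convex set $\mathcal{P}$. \textbf{Step 1 (reduction to the VI iteration).} Throughout the parallel update the drivers' behavior is held fixed, $\textbf{P}_o^{(n)}\equiv\textbf{P}_o$, so that $\textbf{m}^{(n)}=(M_e(\textbf{P}^{(n)};\textbf{P}_o))_{e\in\mathcal{E}}$ encodes precisely the aggregate-flow information that enters $F_u$. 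Since $\Tilde{F}_u(\textbf{P}_u^{(n)},\textbf{m}^{(n)})$ and $F_u(\textbf{P}_u^{(n)},\textbf{P}_{-u}^{(n)};\textbf{P}_o)$ coincide pointwise, so do their gradients with respect to the block $\textbf{P}_u$; hence \eqref{eq:update_P} is exactly $\textbf{P}^{(n+1)}=\text{proj}_{\mathcal{P}}[\textbf{P}^{(n)}-\alpha_n\nabla F(\textbf{P}^{(n)};\textbf{P}_o)]$. By Proposition~\ref{prop:2to4} and the VI characterization established before \eqref{eq:PGD}, the NE set equals the solution set of the variational inequality $\text{VI}(\mathcal{P},\nabla F(\cdot;\textbf{P}_o))$ and the fixed-point set of this projection map, which is nonempty by the existence argument already given.

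\textbf{Step 2 (one-step estimate).} Fix any NE $\textbf{P}^*$, so that $\textbf{P}^*=\text{proj}_{\mathcal{P}}[\textbf{P}^*-\alpha_n\nabla F(\textbf{P}^*;\textbf{P}_o)]$. Using nonexpansiveness of the Euclidean projection onto $\mathcal{P}$ and expanding the square,
\begin{equation*}
\|\textbf{P}^{(n+1)}-\textbf{P}^*\|^2\le\|\textbf{P}^{(n)}-\textbf{P}^*\|^2-2\alpha_n\big\langle \nabla F(\textbf{P}^{(n)};\textbf{P}_o)-\nabla F(\textbf{P}^*;\textbf{P}_o),\,\textbf{P}^{(n)}-\textbf{P}^*\big\rangle+\alpha_n^2\big\|\nabla F(\textbf{P}^{(n)};\textbf{P}_o)-\nabla F(\textbf{P}^*;\textbf{P}_o)\big\|^2 .
\end{equation*}
Since $F$ has continuous, bounded first derivatives on the compact set $\mathcal{P}$ (the second derivatives displayed earlier are continuous there, e.g.\ for $\zeta=1$), $\nabla F(\cdot;\textbf{P}_o)$ is Lipschitz on $\mathcal{P}$ with some modulus $L$, bounding the last term by $\alpha_n^2 L^2\|\textbf{P}^{(n)}-\textbf{P}^*\|^2$. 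For the inner-product term I would invoke monotonicity of $\nabla F(\cdot;\textbf{P}_o)$ on $\mathcal{P}$: under the structural condition identified in Section~\ref{sec:RS_model} --- each $F_u$ convex in $\textbf{P}_u$ together with diagonal dominance of the Jacobian blocks, which holds when $\zeta=1$ and every road lies on at most two of any user's feasible paths --- the symmetrized Jacobian of $\nabla F$ is positive semidefinite, so this term is $\ge\mu\|\textbf{P}^{(n)}-\textbf{P}^*\|^2$ with $\mu\ge 0$ ($\mu>0$ in the strongly monotone case).

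\textbf{Step 3 (convergence).} Combining the bounds gives $\|\textbf{P}^{(n+1)}-\textbf{P}^*\|\le\sqrt{1-2\alpha_n\mu+\alpha_n^2 L^2}\,\|\textbf{P}^{(n)}-\textbf{P}^*\|$. When $\mu>0$, a constant step $\alpha_n=\alpha\in(0,2\mu/L^2)$ makes the factor strictly below one, so $\{\textbf{P}^{(n)}\}$ converges geometrically to $\textbf{P}^*$, which is then the unique NE. When only $\mu=0$ is available, one takes $\alpha_n\to 0$ with $\sum_n\alpha_n=\infty$ and $\sum_n\alpha_n^2<\infty$ and supplements the estimate with co-coercivity of $\nabla F$ to conclude, via the resulting Fej\'{e}r monotonicity of $\{\|\textbf{P}^{(n)}-\textbf{P}^*\|\}$ and a subsequential argument, that $\textbf{P}^{(n)}\to\textbf{P}^*$; this is the convergence guarantee for the gradient-projection algorithm in \cite{scutari2010convex} (and for projected gradient descent on a convex program in \cite{boyd2004convex}), which may simply be cited once the monotonicity hypothesis is in force.

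The hard part is the monotonicity claim in Step 2. Convexity of each $F_u$ in its own block alone does not make the joint projected-gradient map contractive, since the off-diagonal Jacobian terms can induce rotational (non-convergent) behavior; they must be controlled through the road--path incidence matrix $A$, which is exactly what the diagonal-dominance condition from Section~\ref{sec:RS_model} does for $\zeta=1$. For $\zeta>1$ a finer bound on the Jacobian --- or an extra assumption limiting $\eta$, the realizable flows, or the overlap among a single user's paths --- is needed, and I would carry this through as the standing assumption under which Proposition~\ref{prop:converge} is stated.
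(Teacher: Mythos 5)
Your architecture is sound and shares its backbone with the paper's proof: both arguments rest on the non-expansiveness of the Euclidean projection and a Fej\'{e}r-type estimate on $\|\textbf{P}^{(n)}-\textbf{P}^*\|^2$. The difference lies in how the cross term is handled. The paper (Appendix A, Steps 3--5) simply discards $-2\alpha_n\nabla F(\textbf{P}^{(n)};\textbf{P}_o^{(n)})^{\top}(\textbf{P}^{(n)}-\textbf{P}^*)$ by asserting it is non-positive ``by the conditions of the proposition,'' bounds the remaining term using only $\|\nabla F\|^2\le L$ (boundedness, not Lipschitz continuity), and concludes from summability of $\alpha_n^2$ that $S_n=D_n+\tfrac{L}{2}\sum_{t\ge n}\alpha_t^2$ is non-increasing, hence $D_n$ converges. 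You instead keep the cross term, recentre the gradient at $\textbf{P}^*$, and invoke monotonicity plus Lipschitz continuity to obtain the contraction factor $\sqrt{1-2\alpha_n\mu+\alpha_n^2L^2}$. Your route buys two things the paper's does not. First, you make explicit that the inequality $\nabla F(\textbf{P}^{(n)};\textbf{P}_o)^{\top}(\textbf{P}^{(n)}-\textbf{P}^*)\ge 0$ is a Minty-type condition that does \emph{not} follow from blockwise convexity of each $F_u$ in $\textbf{P}_u$ alone; it requires (pseudo-)monotonicity of the joint map $\nabla F$, and the diagonal dominance discussed in Section III controls only the within-block Hessians $H^{F_u}$, not the cross-player Jacobian blocks. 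This is a genuine hypothesis gap in the proposition as stated, and your flagging of it is correct --- the paper's Step 3 relies on exactly this condition tacitly. Second, in the strongly monotone case your estimate actually delivers $\textbf{P}^{(n)}\to\textbf{P}^*$, whereas the paper's Step 5 only establishes that $D_n=\tfrac{1}{2}\|\textbf{P}^{(n)}-\textbf{P}^*\|^2$ converges to \emph{some} limit and then asserts, without justification, that this limit is zero; closing that gap requires precisely the subsequential argument (using $\sum_n\alpha_n=\infty$ together with monotonicity or co-coercivity) that you sketch for your $\mu=0$ branch. In short: same projection-based skeleton, but your version states the monotonicity hypothesis the paper uses implicitly and completes the final limit identification that the paper leaves open.
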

\begin{proof}
    The proof is provided in Appendix \ref{app:prop_converge}.
\end{proof}
Therefore, with Proposition \ref{prop:converge}, the $u$-th component of $\textbf{P}^{(n+1)}$ can be performed locally by the NRS app linked to user $u$. Such distributed gradient updates still lead to the convergence to an NE \cite{facchinei2003finite,nguyen2022distributed,mirror2023} under mild assumptions.

\subsection{Random Update Algorithm}
In practice, not all users receive the traffic conditions nor update the navigational recommendations simultaneously. Therefore, we consider the random update algorithm in this section. Each user updates his/her recommendations in discrete time intervals and infinitely often with a predefined probability $0 < \pi_u < 1$. That is, 
\begin{equation}
    \textbf{P}_u^{(n+1)} = 
    \begin{cases}
\text{proj}_{\mathcal{P}_u}\left[\textbf{P}_u^{(n)}-\alpha_n \nabla_u \Tilde{F}_u(\textbf{P}_u^{(n)}, \textbf{m}^{(n)})\right],  \text{w.p.} \ \pi_u\\
        \textbf{P}_u^{(n)}, \ \text{w.p.}\ 1-\pi_u.
    \end{cases}
\label{eq:update_P_u_r}
\end{equation}
The probability of update $\pi_u$ may vary based on the user's driving habits or the capabilities of V2X technologies within the region containing the user's origin and destination. Consequently, at each time step $n$, only a randomly selected set of users receive the updated recommendations. The random update scheme converges almost surely to an NE under mild assumptions.
\begin{proposition}
    Under the conditions that for all $u \in \mathcal{U}$, the expected cost $F_u$ is continuously differentiable in $\textbf{P} \in \mathcal{P}$ and convex in $\textbf{P}_u \in \mathcal{P}_u$. The sequence of recommendations $\{\textbf{P}^{(n)}\}$, or equivalently the strategy profile of users in the game $\Gamma$, generated by the random update \eqref{eq:update_P_u_r}, converges a.s. to an NE, $\textbf{P}^*$, defined in Definition \ref{def:NE}.
\label{prop:converge2}
\end{proposition}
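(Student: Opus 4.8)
The plan is to recast the random update \eqref{eq:update_P_u_r} as a stochastic projected pseudo-gradient iteration for the game $\Gamma$ and then apply an almost-supermartingale convergence argument. For each $u \in \mathcal{U}$ and $n \ge 0$ let $\xi_u^{(n)} \in \{0,1\}$ indicate whether user $u$ updates at step $n$, so that the $\xi_u^{(n)}$ are mutually independent with $\E[\xi_u^{(n)}] = \pi_u$ and independent of the history $\mathcal{F}_n := \sigma(\textbf{P}^{(0)},\dots,\textbf{P}^{(n)})$. Because $\mathcal{P} = \Pi_{u \in \mathcal{U}}\mathcal{P}_u$ is a product set, $\text{proj}_{\mathcal{P}}$ acts blockwise, and recalling the identity $\nabla_u \Tilde{F}_u(\textbf{P}_u^{(n)},\textbf{m}^{(n)}) = \nabla_u F_u(\textbf{P}_u^{(n)},\textbf{P}_{-u}^{(n)};\textbf{P}_o)$, the iteration \eqref{eq:update_P_u_r} can be written compactly as $\textbf{P}^{(n+1)} = \text{proj}_{\mathcal{P}}\big[\textbf{P}^{(n)} - \alpha_n D_n \nabla F(\textbf{P}^{(n)};\textbf{P}_o)\big]$ with $D_n := \mathrm{diag}(\xi_u^{(n)} I_{k_u})$. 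Writing $D := \mathrm{diag}(\pi_u I_{k_u})$ and $\textbf{w}^{(n)} := (D_n - D)\nabla F(\textbf{P}^{(n)};\textbf{P}_o)$, the search direction splits into its conditional mean $D\nabla F(\textbf{P}^{(n)};\textbf{P}_o)$ plus a martingale difference $\textbf{w}^{(n)}$ with $\E[\textbf{w}^{(n)} \mid \mathcal{F}_n] = 0$; since $\mathcal{P}$ is compact and $\nabla F$ is continuous (Section \ref{sec:RS_model}), $\|\textbf{w}^{(n)}\|$ is bounded by a deterministic constant, hence has bounded conditional second moment.

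Next I would pin down the target set and a Lyapunov function. Since each $\pi_u > 0$ and $\mathcal{P}$ is a product set, a point $\textbf{P}^*$ solves $\text{VI}(\mathcal{P}, D\nabla F(\cdot;\textbf{P}_o))$ iff it solves $\text{VI}(\mathcal{P}, \nabla F(\cdot;\textbf{P}_o))$ (the blockwise inequalities differ only by the positive scalars $\pi_u$), i.e. iff it is an NE of $\Gamma$ in the sense of Definition \ref{def:NE}; this set is nonempty by the existence argument already given. Fix such a $\textbf{P}^*$ and set $V^{(n)} := \|\textbf{P}^{(n)} - \textbf{P}^*\|_{D^{-1}}^2 = \sum_{u \in \mathcal{U}} \pi_u^{-1}\|\textbf{P}_u^{(n)} - \textbf{P}_u^*\|^2$. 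The key observation is that the blockwise Euclidean projection onto $\mathcal{P}$ coincides with the projection in the $\|\cdot\|_{D^{-1}}$-norm and is therefore non-expansive for that norm, and that $\langle D\nabla F(\textbf{P};\textbf{P}_o), \textbf{z} \rangle_{D^{-1}} = \langle \nabla F(\textbf{P};\textbf{P}_o), \textbf{z} \rangle$. Expanding one iteration and using non-expansiveness then yields, for some constant $B < \infty$, $V^{(n+1)} \le V^{(n)} - 2\alpha_n \langle \nabla F(\textbf{P}^{(n)};\textbf{P}_o), \textbf{P}^{(n)} - \textbf{P}^* \rangle - 2\alpha_n \langle \textbf{w}^{(n)}, \textbf{P}^{(n)} - \textbf{P}^* \rangle_{D^{-1}} + \alpha_n^2 B$.

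Taking $\E[\cdot \mid \mathcal{F}_n]$ makes the $\textbf{w}^{(n)}$ term vanish, and the monotonicity / variational-stability property of $\nabla F(\cdot;\textbf{P}_o)$ that underlies Proposition \ref{prop:converge} gives $\langle \nabla F(\textbf{P}^{(n)};\textbf{P}_o), \textbf{P}^{(n)} - \textbf{P}^* \rangle \ge 0$, so $\E[V^{(n+1)} \mid \mathcal{F}_n] \le V^{(n)} + \alpha_n^2 B$. Under the usual diminishing step-size conditions $\sum_n \alpha_n = \infty$, $\sum_n \alpha_n^2 < \infty$ (the same ones used for the parallel scheme), the Robbins--Siegmund almost-supermartingale convergence theorem gives that $V^{(n)}$ converges a.s. for every fixed NE, and that $\sum_n \alpha_n \langle \nabla F(\textbf{P}^{(n)};\textbf{P}_o), \textbf{P}^{(n)} - \textbf{P}^* \rangle < \infty$ a.s. Since $\sum_n \alpha_n = \infty$, some subsequence satisfies $\langle \nabla F(\textbf{P}^{(n_k)};\textbf{P}_o), \textbf{P}^{(n_k)} - \textbf{P}^* \rangle \to 0$; by compactness extract a sub-subsequence with $\textbf{P}^{(n_k)} \to \bar{\textbf{P}}$, and the variational stability of the NE set (the same structural assumption behind Proposition \ref{prop:converge}, making this bracket bounded away from $0$ off the solution set) forces $\bar{\textbf{P}}$ to be an NE. Since the Lyapunov estimate holds verbatim with $\textbf{P}^*$ replaced by $\bar{\textbf{P}}$, the sequence $\|\textbf{P}^{(n)} - \bar{\textbf{P}}\|_{D^{-1}}^2$ converges a.s. and has $0$ as a subsequential limit, hence $\textbf{P}^{(n)} \to \bar{\textbf{P}}$ a.s. If $\nabla F$ is in fact strongly monotone with modulus $\mu > 0$, the argument collapses: the bracket dominates $\mu\|\textbf{P}^{(n)} - \textbf{P}^*\|^2$, and Robbins--Siegmund directly yields $V^{(n)} \to 0$ a.s.

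\textbf{Main obstacle.} The crux is the interaction between the nonlinear projection and the random coordinate selection: one must show that randomizing which users update does not destroy the operator property that drives the parallel convergence result. The $D^{-1}$-weighted distance is exactly the right device here, since it makes $D\nabla F$ inherit the monotonicity/variational stability of $\nabla F$ while keeping the blockwise projection non-expansive; but identifying the a.s. limit as a genuine NE — rather than merely a point at which $V^{(n)}$ stabilizes — still needs the variational-stability (or strong-monotonicity) structure, precisely as for Proposition \ref{prop:converge}. The martingale bookkeeping (zero conditional mean and uniform boundedness of $\textbf{w}^{(n)}$) and the Robbins--Monro step-size requirements are otherwise routine, so Proposition \ref{prop:converge2} is essentially the stochastic-approximation counterpart of Proposition \ref{prop:converge}, covered by the distributed/asynchronous convergence results cited after it.
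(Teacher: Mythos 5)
Your proposal is correct in substance but follows a genuinely different route from the paper's proof. The paper argues user-by-user: it defines $D_u^{(n)}=\tfrac12\|\textbf{P}_u^{(n)}-\textbf{P}_u^*\|^2$, conditions directly on the Bernoulli update event to get $\E[D_u^{(n+1)}]\le\E[D_u^{(n)}]+\pi_u\tfrac{\alpha_n^2}{2}\|\nabla_u F_u\|^2$, forms the compensated quantity $S_u^{(n)}=D_u^{(n)}+\tfrac{L'}{2}\sum_{t\ge n}\alpha_t^2$, and then passes to almost-sure convergence via the Markov inequality and Borel--Cantelli applied to $\Pr(S_u^{(n)}\ge\epsilon)$; that last step leans on an asserted geometric contraction $\|S^{(n+1)}\|_\infty\le\gamma\|S^{(n)}\|_\infty$ with $\gamma\in[0,1]$, which does not follow from the monotone decrease alone (and is vacuous for the needed summability when $\gamma=1$). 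You instead recast the scheme as a single stochastic projected pseudo-gradient iteration with a random block-diagonal selection matrix, split off a bounded martingale-difference noise term, work with the $D^{-1}$-weighted joint Lyapunov function, and invoke Robbins--Siegmund; this buys you a watertight supermartingale step where the paper's Borel--Cantelli summability is shaky, and your subsequence-plus-recentering argument also closes the gap the paper leaves open in passing from ``$D_u^{(n)}$ converges'' to ``$\textbf{P}^{(n)}\to\textbf{P}^*$'' (a priori the distance could stabilize at a positive value). Both arguments ultimately rest on the same unstated structural ingredient --- monotonicity or variational stability of $\nabla F(\cdot;\textbf{P}_o)$, not merely blockwise convexity of each $F_u$ --- to guarantee $\langle\nabla F(\textbf{P}^{(n)};\textbf{P}_o),\textbf{P}^{(n)}-\textbf{P}^*\rangle\ge 0$ at points other than $\textbf{P}^*$ and to identify the limit as an NE; you flag this explicitly, the paper does not. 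The one point worth tightening in your write-up is the standing step-size hypothesis $\sum_n\alpha_n=\infty$, $\sum_n\alpha_n^2<\infty$, which the paper only assumes in its Step 5 and which your identification argument genuinely requires.
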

\begin{proof}
    We give the proof in Appendix \ref{app:prop_converge2}.
\end{proof}

%We follow the conventional convergence analysis of PGD.
%\begin{equation*}
%\begin{aligned}
%    & F(\textbf{P}^{(n)}) - F(\textbf{P}^{*}) \leq \langle  \nabla F(\textbf{P}^{(n)}, \textbf{m}^{(n)}), \textbf{P}^{(n)}-\textbf{P}^{*} \rangle\\
%    &\quad \leq \frac{||\textbf{P}^{(n)}-\textbf{P}^{*}||_2^2 + ||\alpha_n\nabla F||_2^2 - ||\textbf{q}^{(n+1)}-\textbf{P}^{*}||_2^2}{2\alpha_n}\\
%    &\quad \leq \frac{||\textbf{P}^{(n)}-\textbf{P}^{*}||_2^2 - ||\textbf{P}^{(n+1)}-\textbf{P}^{*}||_2^2}{2\alpha_n} + \frac{\alpha_n}{2}||\nabla F||_2^2
%\end{aligned}
%\end{equation*} Then, with constant step size $\alpha_n=\alpha$, summing from $n=0$ to $N-1$ and dividing by $N$,

%\begin{equation*}
%\begin{aligned}
%     \frac{1}{N}\sum_{n=0}^{N-1}\bigg(F(\textbf{P}^{(n)}) - F(\textbf{P}^{*})\bigg)& \leq \frac{||\textbf{P}^{(0)}-\textbf{P}^{*}||_2^2}{2 \alpha N} \\
%    &+ \frac{\alpha}{2N}\sum_{n=0}^{N-1}||\nabla F(\textbf{P}^{(n)})||_2^2
%\end{aligned}
%\end{equation*} By $F\big(\frac{1}{N}\sum_{n=0}^{N-1}\textbf{P}^{(n)}\big) \leq \frac{1}{N}\sum_{n=0}^{N-1}F(\textbf{P}^{(n)})$ and denote $\Bar{\textbf{P}}=\frac{1}{N}\sum_{n=0}^{N-1}\textbf{P}^{(n)}$, we arrive at
%\begin{equation*}
%\begin{aligned}
%     F(\bar{\textbf{P}})-F(\textbf{P}^{*}) \leq \frac{||\textbf{P}^{(0)}-\textbf{P}^{*}||_2^2}{2 \alpha N}+ \frac{\alpha}{2N}\sum_{n=0}^{N-1}||\nabla F(\textbf{P}^{(n)})||_2^2.
%\end{aligned}
%\end{equation*}

\section{Numerical Studies}

In this section, we illustrate our proposed NRS via numerical studies based on a network depicted in Fig. \ref{fig:network1} and a more complex one shown in Fig. \ref{fig:network2}.

\subsection{Baselines}
To illustrate the enhanced efficacy of the proposed NRS in reducing user/total travel time and its adaptability to changing environments, we compare the outcomes of the following baseline with the proposed updated recommendation.
\subsubsection{Selfish recommendation} We refer to the case of simply suggesting the shortest path without considering interactions with other users and drivers as selfish recommendations. More specifically, when a user submits a recommendation request with the OD pair, the recommendation system directly recommends the shortest travel time path to the user based solely on current traffic conditions.

\begin{figure}
    \centering
    \includegraphics[width=3.45in]{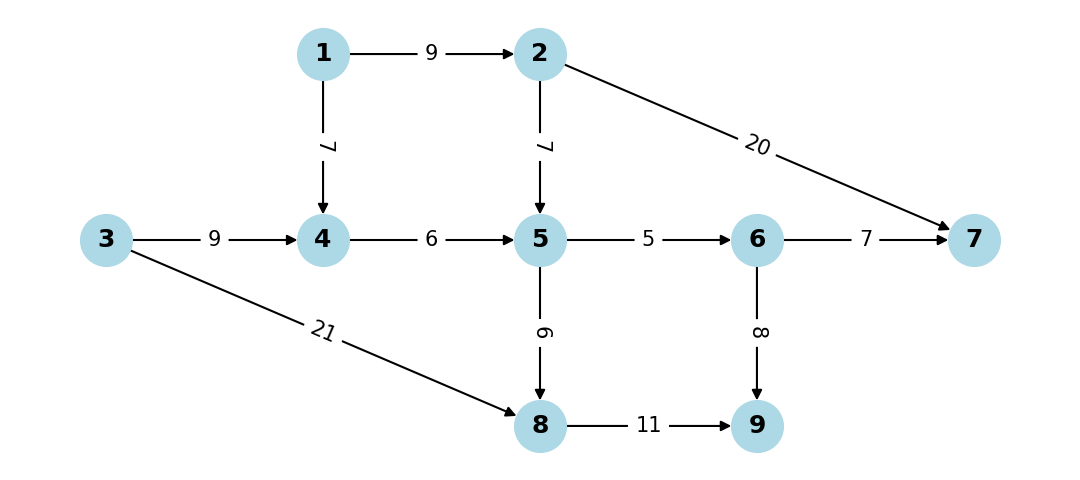}
    \caption{An example network $1$ for our numerical study, where the value on each edge denotes the free-flow travel time.}
    \label{fig:network1}
\vspace{-3mm}
\end{figure}

\begin{figure}
    \centering
    \includegraphics[width=3.45in]{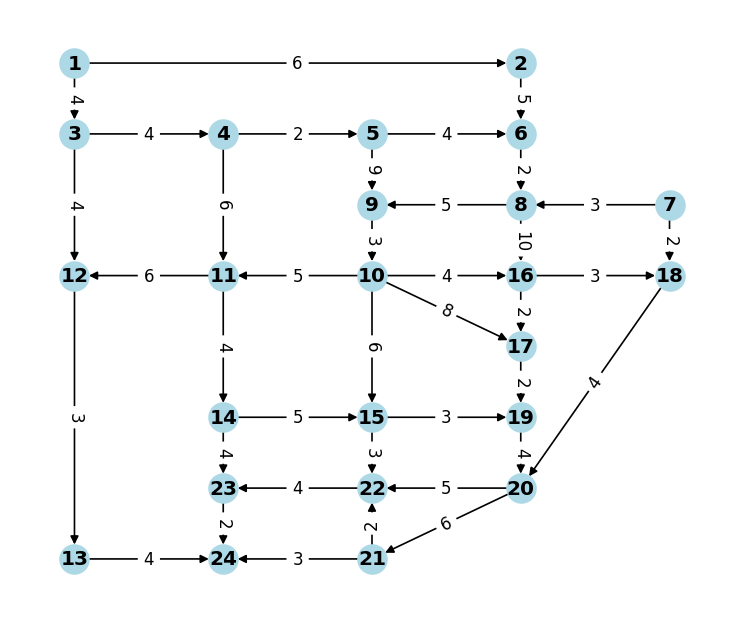}
    \caption{A more complex example network $2$ (based on the network structure of Sioux Falls) for our numerical study. The value on each edge denotes the free-flow road travel time.}
    \label{fig:network2}
\vspace{-3mm}
\end{figure}

\subsubsection{Uniform recommendation} We refer to the case of uniformly randomly picking a feasible path as a uniform recommendation. More specifically, when a user submits a recommendation request with the OD pair, the recommendation system randomly picks a path from the feasible path set to the user uniformly, regardless of current traffic conditions, other users, or other drivers.

\subsubsection{Without MNL Drivers} In this case, the recommendation system does not take other drivers modeled by the Multinomial Logit (MNL) model in \eqref{eq:MNL} into account. When suggesting the recommendations, this baseline underestimates the traffic conditions caused by other drivers who do not use the NRS. However, the travel time costs experienced by users are still influenced by those drivers.

\subsubsection{Proposed recommendation without updates (WU)} In this case, the NRS plans the proposed recommendation when the user submits the recommendation request with the OD pair. The recommendation persists unchanged throughout the user's journey, even in the event of an accident or significant changes in traffic conditions.

\subsubsection{Proposed updated recommendation (PU and RU)} In this case, the NRS plans the proposed recommendation when the user submits the recommendation request with the OD pair. The parallel recommendation updates according to the distributed gradient algorithm in \eqref{eq:update_P_u} is referred to as \textbf{PU}; the random updates according to \eqref{eq:update_P_u_r} is called \textbf{RU}. In this context, when an accident happens, or there are significant changes in traffic conditions, the users can adapt their paths promptly.

\begin{table}[htbp]
\caption{Case Study Setup for Network 1}
\begin{center}
\begin{tabular}{ccl}
\toprule
OD pair & User \# & Feasible paths \\
\midrule
$1$-$7$ & $1$ to $30$ & \thead[l]{Path $0$: $1$-$2$-$5$-$6$-$7$ \\ Path $1$: $1$-$2$-$7$ \\ Path $2$: $1$-$4$-$5$-$6$-$7$} \\
\midrule
$3$-$9$ & $31$ to $50$ & \thead[l]{Path $0$: $3$-$4$-$5$-$6$-$9$ \\ Path $1$: $3$-$4$-$5$-$8$-$9$ \\ Path $2$: $3$-$8$-$9$} \\
\midrule
$1$-$9$ & $51$ to $60$ & \thead[l]{Path $0$: $1$-$2$-$5$-$6$-$9$ \\ Path $1$: $1$-$2$-$5$-$8$-$9$ \\ Path $2$: $1$-$4$-$5$-$6$-$9$ \\ Path $3$: $1$-$4$-$5$-$8$-$9$} \\
\bottomrule \\[-0.3em]
\multicolumn{3}{c}{BPR cost with $\eta=0.35$, $\zeta=1$, $k_e=60, \forall e \in \mathcal{E}$.}\\
\multicolumn{3}{c}{Other drivers: $25$ with OD $1$-$9$ and $25$ with OD $3$-$7$.}
\end{tabular}
\end{center}
\label{tab:setup1}
\end{table}

\begin{table}[htbp]
\caption{Case Study Setup for Network 2}
\begin{center}
\begin{tabular}{ccl}
\toprule
OD pair & User \# & Feasible paths \\
\midrule
$1$-$10$ & $1$ to $60$ & \thead[l]{Path $0$: $1$-$2$-$6$-$8$-$9$-$10$ \\ Path $1$: $1$-$3$-$4$-$5$-$6$-$8$-$9$-$10$ \\ Path $2$: $1$-$3$-$4$-$5$-$9$-$10$} \\
\midrule
$2$-$18$ & $61$ to $85$ & \thead[l]{Path $0$: $2$-$6$-$8$-$9$-$10$-$16$-$18$ \\ Path $1$: $2$-$6$-$8$-$16$-$18$} \\
\midrule
$4$-$16$ & $86$ to $100$ & \thead[l]{Path $0$: $4$-$5$-$6$-$8$-$9$-$10$-$16$ \\ Path $1$: $4$-$5$-$6$-$8$-$16$ \\ Path $2$: $4$-$5$-$9$-$10$-$16$} \\
\bottomrule \\[-0.3em]
\multicolumn{3}{c}{BPR cost with $\eta=0.15$, $\zeta=4$, $k_e=100, \forall e \in \mathcal{E}$}\\
\multicolumn{3}{c}{Other drivers: OD $1$-$10$, $2$-$18$, $4$-$16$, $6$-$21$, $11$-$20$ are all $40$.}
\end{tabular}
\end{center}
\label{tab:setup2}
\end{table}

\subsection{Case study on network 1}

Consider network $1$ depicted in Fig. \ref{fig:network1}. In this scenario, we have $30$ users seeking to travel from node $1$ to $7$ (OD $1$-$7$), $20$ users from node $3$ to $9$ (OD $3$-$9$), and $10$ users from node $1$ to $9$ (OD $1$-$9$). The feasible path set for users is specified in Table \ref{tab:setup1}. Given the OD pairs and the network structure, the road-path incidence matrix satisfies the conditions $\sum_{i=1}^{k_u} a_{es_{u, i}}\leq 2, \ \forall e \in \mathcal{E}$.
Using the BPR function for the cost $c_e(\cdot)$ on each road $e \in \mathcal{E}$, and setting parameters $\zeta=1$, $\eta=0.35$, and $k_e=60$ for simplicity, the cost function $F_u$ for user $u$ evaluated at each time step is continuously differentiable in $\textbf{P}$ and convex in $\textbf{P}_u$. Additionally, besides the users, there are $25$ drivers with OD $1$-$9$ and $25$ drivers with OD $3$-$7$, whose behavior, following the MNL choice model, also contributes to the flow on roads in the network. 

\subsubsection{Scenario $1$: stable traffic conditions}
Under stable traffic conditions, free from lane closures, accidents, or abrupt alterations in traffic flow induced by other drivers, the local recommendation app for each user $u$ employs the parallel gradient update algorithm outlined in \eqref{eq:update_P_u} at each time step $n$ to adjust the recommended mixed strategy. The recommendations under scenario $1$ to one of the users with each OD pair are shown in Fig. \ref{fig:DRS_ex1}. To evaluate the effectiveness of these recommendations, we compare the travel time costs experienced by users for each OD pair to the baselines, and list them in Table \ref{tab:network1}.

\begin{figure}
    \centering
    \includegraphics[width=2.7in]{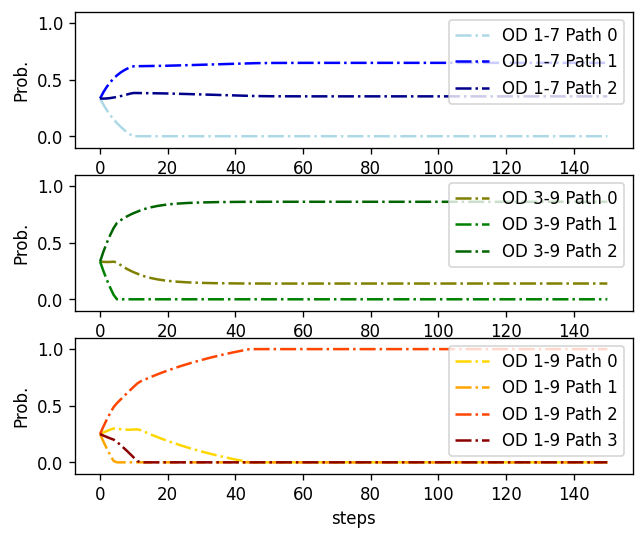}
    \caption{The proposed recommendations (PU) under stable traffic conditions for case study on network $1$.}
    \label{fig:DRS_ex1}
\vspace{-3mm}
\end{figure}

\begin{table}[htbp]
\vspace{+3mm}
\caption{Comparison of travel time for network $1$}
\begin{center}
\begin{tabular}{lcccccc}
\toprule
\multicolumn{7}{c}{Stable Traffic Conditions}\\
\midrule
 OD & Selfish & Uniform &  w/o MNL & WU & PU & RU\\
\midrule
      $1$-$7$ & 36.53 & 33.12 & 33.90 & \textbf{32.55}  & \textbf{32.55} & \textbf{32.55}\\ 
      $3$-$9$ & 39.60 & 36.37 & 36.88 & \textbf{35.42}  & \textbf{35.42} & \textbf{35.42}\\
      $1$-$9$ & 37.71 & 35.56 & 36.30 & \textbf{33.76}  & \textbf{33.76} & \textbf{33.76}\\
      Total & 2264.8 & 2076.7 & 2117.6 & \textbf{2022.5} & \textbf{2022.5} & \textbf{2022.5}\\
\midrule
\multicolumn{7}{c}{Changing Traffic Conditions}\\
\midrule
 OD & Selfish & Uniform &  w/o MNL & WU & PU & RU\\
\midrule
      $1$-$7$ & 40.59 & 35.55 & 34.41 & \textbf{33.79} & 34.07 & 34.07\\ 
      $3$-$9$ & 43.66 & 37.59 & 36.81 & \textbf{35.92} & 36.25 & 36.25\\
      $1$-$9$ & 41.77 & 37.39 & 38.25 & 37.28 & \textbf{35.38} & \textbf{35.38}\\
      Total & 2508.5 & 2192.2 & 2150.9 & 2104.8 & \textbf{2100.9} & \textbf{2100.9}\\
\bottomrule \\[-0.3em]
\end{tabular}
\end{center}
\label{tab:network1}
\vspace{-5mm}
\end{table}

In the case of selfish recommendation, users with OD pair $1$-$7$ are directed through path $2$, users with OD pair $3$-$9$ through path $0$, and users with OD pair $1$-$9$ solely through path $2$. While such a recommendation may be sufficient for a small number of users, for a larger number of users, recommending the same path to a significant fraction of users can undermine the effectiveness of the initially shortest route. This is called the flash crowd effect. The selfish shortest path recommendations result in a total travel time cost (summing up the expected cost for all the users) of $2264.8$ (units). However, by adhering to the recommendations depicted in Fig. \ref{fig:DRS_ex1}, the total travel time costs across all users can be reduced to $2022.5$ (units). The comparison tells us that the proposed recommendations suggested by the NRS are capable of mitigating the flash crowd effect. Besides, comparing the cost for each user as well as the total travel time cost between uniform recommendations and the proposed recommendations by the NRS, we can further conclude that strategic recommendation is essential in urban transportation networks.

In addition, by comparing the results between the recommendation without considering the other drivers (modeled by MNL) and the proposed recommendation suggested by the NRS, it becomes evident that accounting for non-user drivers is crucial. Failing to incorporate their impact may result in imprecise traffic conditions considered by the recommendation system, consequently leading to higher expected costs for users and an increased total cost of $2117.6$. This without MNL drivers scenario can be interpreted as a misinformed case for the NRS, wherein the recommendation system neglects the impact of drivers, resulting in non-precise traffic conditions.

Furthermore, it is worth mentioning that when implementing the proposed recommendations, users and drivers with OD pair $1$-$9$ incur costs of $33.76$ and $34.03$, respectively. This serves as further evidence of the importance of strategic recommendations. The better performance of the proposed recommendations compared to allowing drivers to choose paths based on their behaviors emphasizes the value of strategic navigational guidance.

\begin{figure}[!ht]
    \centering
    \begin{subfigure}[t]{0.48\textwidth}
        \centering
        \includegraphics[width=2.7in]{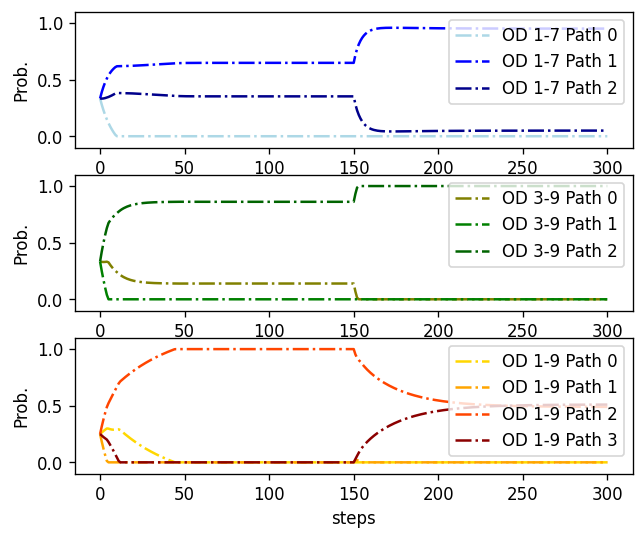}
        \caption{The parallel update recommendations (PU).}
        \label{fig:DRS_ex1_p}
    \end{subfigure}
    \hfill
    \begin{subfigure}[t]{0.48\textwidth}
        \centering
        \includegraphics[width=2.7in]{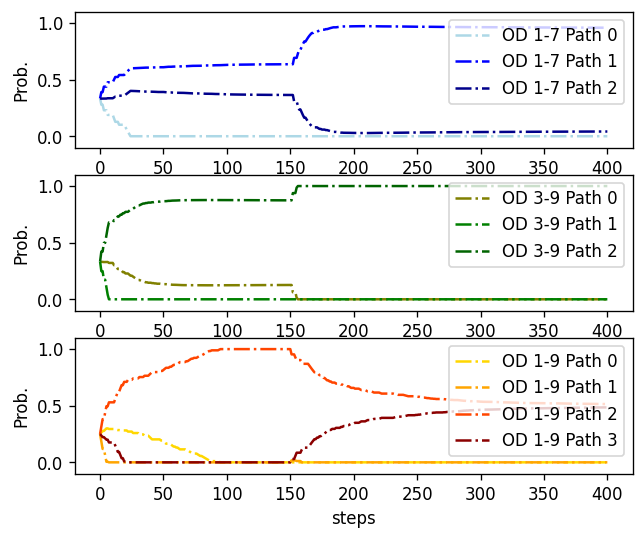}
        \caption{The random update recommendations (RU) with $\pi_u=0.5, \forall u$.}
        \label{fig:DRS_ex1_rand_}
    \end{subfigure}
    \caption{The updated recommendations under sudden changing traffic conditions for case study $1$ (An accident occurs on edge $5$-$6$ that caused an increase in travel time at time step $150$).}
\label{fig:DRS_ex1_}
\vspace{-3mm}
\end{figure}

\subsubsection{Scenario $2$: changing conditions}
As traffic conditions are dynamic and subject to rapid changes, such as accidents or sudden increases in flow due to other drivers, we need to consider these dynamic variations when framing the recommendation system. Our aim for this scenario is to assess if our updated scheme can effectively address these evolving traffic conditions. To illustrate, we consider an accident occurring on edge $5$-$6$, resulting in increased travel time on this road at time step $150$. The outcomes of both updated schemes are presented in Fig. \ref{fig:DRS_ex1_}, with cost comparisons against other baselines detailed in Table \ref{tab:network1}.

However, while the `with update' schemes (PU and RU) achieve a lower total travel time cost and benefit users with OD pair $1$-$9$, the costs for OD $1$-$7$ and $3$-$9$ increase compared to the `without update' scheme (WU). This shift occurs because users with OD $1$-$9$ are more severely affected by the accident, and the redistribution of flow caused by users with OD pair $1$-$9$ to alternative paths impacts the costs experienced by users with other OD pairs. This phenomenon is referred to as the ``paradox of route update''. It is important to note that in practice, the likelihood of encountering this paradox can be low. Users with OD pair $1$-$9$ would likely deviate from the recommendations generated by the `without update' scheme if they were aware of the accident. Consequently, the costs for OD pair $1$-$7$ and $3$-$9$ in the `without update' scheme would not be achievable. 

When comparing the parallel and random update schemes depicted in Fig. \ref{fig:DRS_ex1_p} and Fig. \ref{fig:DRS_ex1_rand_}, respectively, we note that their expected costs for users and total travel time costs are close, but the random update scheme takes more steps to stabilize, suggesting that the parallel update method is more effective. This shows that the random update strategy remains viable, especially in situations where implementing parallel updates is not feasible.

%%%%%%%%%%%%%%%%%%%%%%%%%%%%%%%%%%%%%%%%%%%%%

\subsection{Case study on network $2$} 
In this case, we consider a more practical scenario, where we adopt the network structure from the Sioux Falls network \cite{leblanc1975efficient}, and use parameters $\eta=0.15, \zeta=4$ for the BPR cost function as often recommended. It is worth noting that under such a setting the conditions $\sum_{i=1}^{k_u} a_{es_{u, i}}\leq 2, \ \forall e \in \mathcal{E}$ for the incidence-path matrix may not hold, which implies that the cost function $F_u$ for user $u$ evaluated at each time step is not necessarily convex in $\textbf{P}_u$. However, our experimental results still demonstrate that the gradient update algorithm in \eqref{eq:update_P_u} leads to stabilized recommendations if traffic conditions remain unchanged. The detailed setups for users and drivers are listed in Table \ref{tab:setup2}. For network $2$, we will examine both scenarios of stable and changing conditions, presenting the resulting travel time costs in Table \ref{tab:network2}.

\begin{table}[htbp]
\vspace{+3mm}
\caption{Comparison of travel time for network $2$}
\begin{center}
\begin{tabular}{lcccccc}
\toprule
\multicolumn{7}{c}{Stable Traffic Conditions}\\
\midrule
 OD & Selfish   & Uniform   &  w/o MNL   & WU   &  PU  &  RU\\
\midrule
      $1$-$10$ & 36.22 & 31.28 & 36.00 & \textbf{26.86}  & \textbf{26.86} & \textbf{26.86}\\ 
      $2$-$18$ & 32.92 & 30.60 & 32.19 & \textbf{24.82}  & \textbf{24.82} & \textbf{24.82}\\
      $4$-$16$ & 24.00 & 26.01 & \textbf{22.39} & 22.64  & 22.64 & 22.64\\
      Total & 3356.4 & 3033.0 & 3300.3 & \textbf{2571.7} & \textbf{2571.7} & \textbf{2571.7}\\
\midrule
\multicolumn{7}{c}{Changing Traffic Conditions}\\
\midrule
 OD & Selfish & Uniform &  w/o MNL & WU & PU & RU\\
\midrule
      $1$-$10$ & 45.55 & 36.13 & \textbf{26.40} & 27.45 & 26.57 & 26.57\\ 
      $2$-$18$ & 42.26 & 40.97 & 30.11 & \textbf{29.15} & 29.76 & 29.76\\
      $4$-$16$ & 31.70 & 35.07 & 28.38 & 28.25 & \textbf{28.05} & 28.06\\
      Total & 4265.2 & 3718.1 & 2762.3 & 2799.7 & \textbf{2759.0} & 2759.2\\
\bottomrule \\[-0.3em]
\end{tabular}
\end{center}
\label{tab:network2}
\vspace{-3mm}
\end{table}

\subsubsection{Scenario $1$: stable traffic conditions} 
The results under stable traffic conditions are shown in Fig. \ref{fig:DRS_ex2}, in which we can observe stabilized recommendations after time step $20$. Comparing the costs of selfish and uniform recommendations with the proposed one suggested by the NRS in Table \ref{tab:network2} drives us to the same conclusion as the case study for network $1$: strategic recommendation plays an important role in urban transportation networks.

While the lowest total travel time cost is attained by the proposed recommendations, the scenario without considering other drivers (modeled by MNL) surpasses the proposed recommendation for users with OD pair $4$-$16$, as indicated in Table \ref{tab:network2}. This ``paradox of misinformation'' arises because occasionally, inaccurate traffic conditions considered by the NRS may lead to lower costs for certain users. However, it can be argued that this situation may not occur in practice. If users have access to precise traffic conditions, recommendations resulting from misinformed traffic conditions (neglecting the impact of other drivers) would not align with the incentives of users other than OD pair $4$-$16$. In such cases, other users would not follow the recommendations, making the cost for OD pair $4$-$16$ in the table unattainable.

\begin{figure}
    \centering
    \includegraphics[width=2.7in]{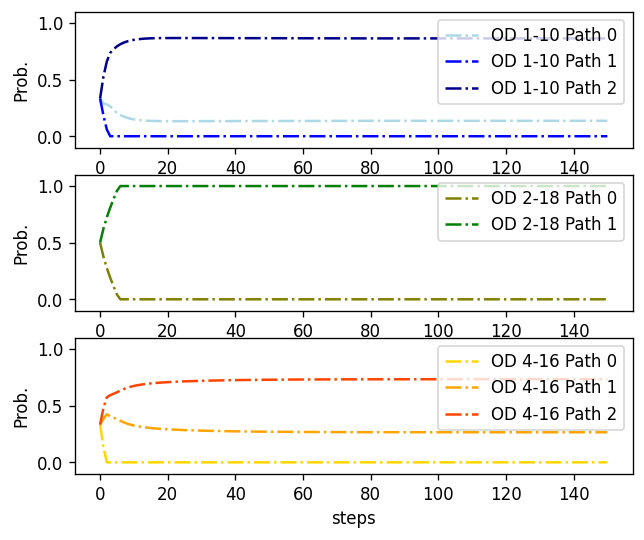}
    \caption{The recommendations (PU) under stable traffic conditions for case study on network $2$.}
    \label{fig:DRS_ex2}
\end{figure}

\begin{figure}[!ht]
    \centering
    \begin{subfigure}[t]{0.48\textwidth}
        \centering
        \includegraphics[width=2.7in]{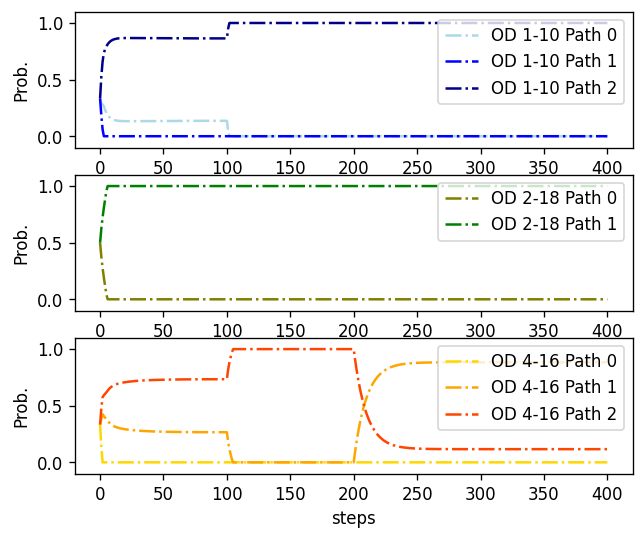}
        \caption{The parallel update recommendations (PU).}
        \label{fig:DRS_ex2_p}
    \end{subfigure}
    \hfill
    \begin{subfigure}[t]{0.48\textwidth}
        \centering
        \includegraphics[width=2.7in]{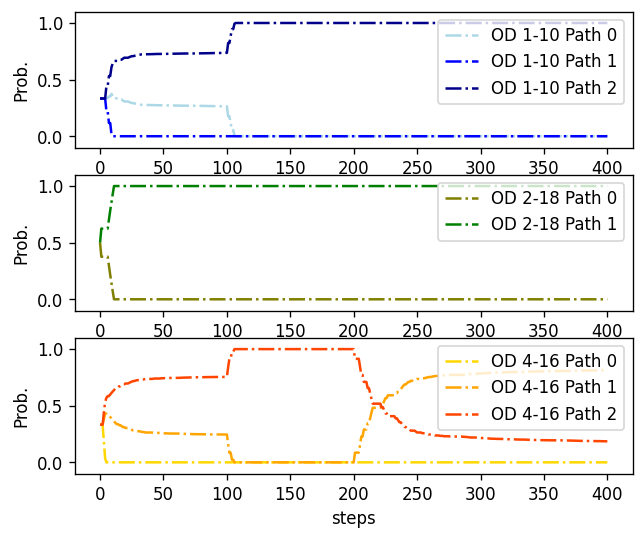}
        \caption{The random update recommendations (RU) with $\pi_u=0.5, \forall u$.}
        \label{fig:DRS_ex2_rand_}
    \end{subfigure}
    \caption{The updated recommendations under sudden changing traffic conditions for case study $2$ (An increase in travel time on edge $6$-$8$ at time step $100$ and an increase in travel time on edge $10$-$16$ at time step $200$).}
\label{fig:DRS_ex2_}
\vspace{-3mm}
\end{figure}

\subsubsection{Scenario $2$: changing conditions}
In this scenario, we assume an increase in travel time on edge $6$-$8$ at time step $100$, followed by a similar rise on edge $10$-$16$ at time step $200$. The results for parallel and random update schemes are shown in Fig. \ref{fig:DRS_ex2_p} and Fig. \ref{fig:DRS_ex2_rand_}, respectively. Despite encountering both the ``paradox of misinformation'' and the ``paradox of route update'' in this situation, the proposed parallel updated scheme consistently yields the lowest total travel time cost compared to other baselines. Furthermore, even in this more practical scenario, the random update strategy still works and stabilizes when traffic conditions remain unchanged, as evidenced in Fig. \ref{fig:DRS_ex2_rand_} after time step $300$.

\section{Conclusion}
In conclusion, our study introduces an incentive-compatible Navigation Recommendation System (NRS) from a game-theoretic perspective, addressing challenges posed by human user non-compliance and non-user driver behaviors in urban transportation networks. Our primary objective is to guide users toward an optimal traffic equilibrium, ensuring their compliance with the provided recommendations.
Moreover, we propose a dynamic NRS with parallel and random update schemes by leveraging V2X technology. This innovation enables users to receive real-time navigational updates while driving, fostering efficient navigation and empowering users to promptly adapt to changing traffic conditions. Our numerical analyses of two case studies demonstrate that the updated scheme achieves the lowest total user travel time costs while ensuring user engagement and adherence to recommendations.

\bibliography{reference}
\bibliographystyle{IEEEtran}

\appendix

\subsection{Proof of Proposition \ref{prop:converge}} \label{app:prop_converge}
We divide the proof into the following five steps.
\vspace{2mm}

\noindent \textbf{Step 1:} \textit{Let $\textbf{P} \in \mathcal{P}$ and $\textbf{q} \in \mathbb{R}^d$, where $d=\Pi_{u \in \mathcal{U}}|\mathcal{S}_u|$. Then, $(\text{proj}_{\mathcal{P}}[\textbf{q}]-\textbf{q})^\top (\text{proj}_{\mathcal{P}}[\textbf{q}]-\textbf{P}) \leq 0$, $\forall \textbf{P} \in \mathcal{P}$.}

\begin{proof}
    By the definition of $\text{proj}_{\mathcal{P}}[\textbf{q}]:=\argmin_{\textbf{P} \in \mathcal{P}} \frac{1}{2}|| \textbf{P}- \textbf{q} ||^2$ for $\textbf{q} \in \mathbb{R}^d$ and the gradient of $\frac{1}{2}|| \textbf{P}- \textbf{q} ||^2$ at $\textbf{P}=\text{proj}_{\mathcal{P}_u}[\textbf{q}]$ being $(\text{proj}_{\mathcal{P}}[\textbf{q}]-\textbf{q})$, the statement is the first-order optimality condition for $\text{proj}_{\mathcal{P}}[\textbf{q}]$.
\end{proof}

\noindent \textbf{Step 2:} \textit{For any time step $n$, $||\textbf{P}^{(n+1)}-\textbf{P}^*|| \leq ||\textbf{q}^{(n+1)}-\textbf{P}^*||$, where $\textbf{P}^* \in \mathcal{P}$ is an NE defined in Definition \ref{def:NE}.}

\begin{proof}
    With the fact that $\textbf{P}^{(n+1)}=\text{proj}_{\mathcal{P}}[\textbf{q}^{(n+1)}]$, the inequality specified in step 1 becomes $$\big(\textbf{P}^{(n+1)}-\textbf{q}^{(n+1)}\big)^\top \big(\textbf{P}^{(n+1)}-\textbf{P}^*\big) \leq 0.$$ By expanding $\textbf{P}^{(n+1)}-\textbf{q}^{(n+1)}=\textbf{P}^{(n+1)}-\textbf{P}^*+\textbf{P}^*-\textbf{q}^{(n+1)}$, the inequality indicates that
    \begin{align*}
        ||\textbf{P}^{(n+1)}-\textbf{P}^*||^2 &\leq \big(\textbf{q}^{(n+1)}-\textbf{P}^*\big)^\top \big(\textbf{P}^{(n+1)}-\textbf{P}^*\big)\\
        &\leq ||\textbf{q}^{(n+1)}-\textbf{P}^*|| \ ||\textbf{P}^{(n+1)}-\textbf{P}^*||,
    \end{align*} where the last inequality comes from the Cauchy-Schwarz inequality. Then, dividing both sides by $||\textbf{P}^{(n+1)}-\textbf{P}^*||$ completes the proof.
\end{proof} 

\noindent \textbf{Step 3:} \textit{By letting $D_{n+1}=\frac{1}{2}||\textbf{P}^{(n+1)}-\textbf{P}^*||^2$, then $D_{n+1} \leq D_{n} + \frac{\alpha_n^2}{2} ||\nabla F(\textbf{P}^{(n)}; \textbf{P}_o^{(n)})||^2.$}

\begin{proof}
    First, recall that at the same point $\textbf{P}_{u}^{(n)}, \textbf{P}_{-u}^{(n)}, \textbf{P}_o^{(n)}$, $\nabla_u \Tilde{F}_u(\textbf{P}_u^{(n)},\textbf{m}^{(n)})$ and $ \nabla_u F_u(\textbf{P}_{u}^{(n)}, \textbf{P}_{-u}^{(n)}; \textbf{P}_o^{(n)})$ have an identical value, then $\nabla \Tilde{F}(\textbf{P}^{(n)}, \textbf{m}^{(n)})$ and $\nabla F(\textbf{P}^{(n)}; \textbf{P}_o^{(n)})$ also have an identical value at the same $\textbf{P}^{(n)}, \textbf{P}_o^{(n)}$. Then, following step 2, we have
    \begin{align*}
    &||\textbf{P}^{(n+1)}-\textbf{P}^*||^2 \leq ||\textbf{q}^{(n+1)}-\textbf{P}^*||^2\\
    & = ||\textbf{P}^{(n)}-\alpha_n \nabla \Tilde{F}(\textbf{P}^{(n)}, \textbf{m}^{(n)}) -\textbf{P}^*||^2\\
    & = ||\textbf{P}^{(n)}-\alpha_n \nabla F(\textbf{P}^{(n)}; \textbf{P}_o^{(n)}) -\textbf{P}^*||^2\\
    &=||\textbf{P}^{(n)}-\textbf{P}^*||^2-2 \alpha_n \nabla F(\textbf{P}^{(n)}; \textbf{P}_o^{(n)})^\top \big(\textbf{P}^{(n)}-\textbf{P}^*\big) \\
    & \qquad \qquad \qquad \quad + \alpha_n^2 ||\nabla F(\textbf{P}^{(n)}; \textbf{P}_o^{(n)})||^2\\
    &\leq ||\textbf{P}^{(n)}-\textbf{P}^*||^2 + \alpha_n^2 ||\nabla F(\textbf{P}^{(n)}; \textbf{P}_o^{(n)})||^2,
\end{align*} where the third equality follows by the conditions in Proposition \ref{prop:converge} so that $\nabla F(\textbf{P}^{(n)}; \textbf{P}_o^{(n)})^\top \big(\textbf{P}^{(n)}-\textbf{P}^*\big) \geq 0$. Dividing both sides by $\frac{1}{2}$, we can obtain $D_{n+1} \leq D_{n} + \frac{\alpha_n^2}{2} ||\nabla F(\textbf{P}^{(n)}; \textbf{P}_o^{(n)})||^2$, which completes the proof.
\end{proof}

\noindent \textbf{Step 4:} \textit{Define $S_n = D_n + \frac{L}{2} \sum_{t=n}^{\infty} \alpha_t^2$, then $S_{n+1} \leq S_{n}$.}

\begin{proof}
    According to step 3 and conditions in Proposition \ref{prop:converge} so that for any time step $n$, $||\nabla F(\textbf{P}^{(n)}; \textbf{P}_o^{(n)})||^2 \leq L$, we have \begin{align*}
    S_{n+1} &= D_{n+1} + \frac{L}{2} \sum_{t=n+1}^{\infty} \alpha_t^2\\
    & \leq D_n + \frac{L}{2} \alpha_n^2 + \sum_{t=n+1}^{\infty} \alpha_t^2 = D_n + \frac{L}{2} \sum_{t=n}^{\infty} \alpha_t^2 = S_n,
\end{align*} which shows that $S_{n+1} \leq S_{n}$.
\end{proof}

\noindent \textbf{Step 5:} \textit{Under the assumption that $\sum_{n=1}^{\infty} \alpha_n^2 < \infty$, the sequence of ${\textbf{P}^{(n)}}$ converges to $\textbf{P}^*$.}

\begin{proof}
    By applying the argument in step 4 recursively, with the common assumption that $\sum_{n=1}^{\infty} \alpha_n^2 < \infty$, we have
$$
    S_{n} \leq D_1 + \frac{L}{2} \sum_{n=1}^{\infty} \alpha_n^2 < \infty, 
$$ indicating that the sequence $S_n$ converges. The definition of $S_n = D_n + \frac{L}{2} \sum_{t=n}^{\infty} \alpha_t^2$ thereby implies the convergence of the sequence $D_n$. Consequently, the sequence ${\textbf{P}^{(n)}}$ converges to $\textbf{P}^*$, which completes the proof.
\end{proof}

\subsection{Proof of Proposition \ref{prop:converge2}} \label{app:prop_converge2}
The proof extends the argument in Appendix \ref{app:prop_converge} used for the parallel update algorithm. For the $u$-th component of $\textbf{P}^{(n+1)}$ in \eqref{eq:update_P_u_r}, with the conditions stated in Proposition \ref{prop:converge2}, we have an analogy for step 3 of Appendix \ref{app:prop_converge}.
\vspace{2mm}

\noindent \textbf{Step 3$^\prime$:} \textit{Let $D_u^{(n+1)}=\frac{1}{2}||\textbf{P}_u^{(n+1)}-\textbf{P}_u^*||^2$, then $\E[D_u^{(n+1)}] \leq E[D_u^{(n)}] + \pi_u \frac{\alpha_n^2}{2} ||\nabla_u F_u(\textbf{P}_u^{(n)}, \textbf{P}_{-u}^{(n)}, \textbf{P}_o^{(n)})||^2.$}

\begin{proof}
    Recall that the recommendation for user $u$ updates with probability $0 < \pi_u <1$, 
    \begin{align*}
    \E[D_u^{(n+1)}] & =\pi_u \big(\E[D_u^{(n)}] +  \frac{\alpha_n^2}{2} ||\nabla_u F_u(\textbf{P}_u^{(n)}, \textbf{P}_{-u}^{(n)}, \textbf{P}_o^{(n)})||^2\big) \\
    & \quad + (1- \pi_u)\E[D_u^{(n)}]\\
    &=\mathbb{E}[D_u^{(n)}] + \pi_u \frac{\alpha_n^2}{2} ||\nabla_u F_u(\textbf{P}_u^{(n)}, \textbf{P}_{-u}^{(n)}, \textbf{P}_o^{(n)})||^2,
\end{align*} which completes the proof.
\end{proof}

\noindent \textbf{Step 4$^\prime$:} \textit{Define $S_u^{(n)} = D_u^{(n)} + \frac{L^\prime}{2} \sum_{t=n}^{\infty} \alpha_t^2$, then $\E[S_u^{(n+1)}]\leq \E[S_u^{(n)}], \forall u \in \mathcal{U}.$}

\begin{proof}
    According to step 3$^\prime$ and conditions in Proposition \ref{prop:converge2} so that for any time step $n$, $||\nabla_u F_u(\textbf{P}_u^{(n)}, \textbf{P}_{-u}^{(n)}; \textbf{P}_o^{(n)})||^2 \leq L^\prime$, we have \begin{align*}
    \E[S_u^{(n+1)}] &= \E[D_u^{(n+1)}] + \frac{L^\prime}{2} \sum_{t=n+1}^{\infty} \alpha_t^2\\
    &= \E[D_u^{(n)}] + \pi_u \frac{L^\prime}{2} \alpha_n^2 + \frac{L^\prime}{2} \sum_{t=n+1}^{\infty} \alpha_t^2\\
    &\leq \E[D_u^{(n)}] + \frac{L^\prime}{2} \sum_{t=n}^{\infty} \alpha_t^2 = \E[S_u^{(n)}],
\end{align*} which shows that $\E[S_u^{(n+1)}]\leq \E[S_u^{(n)}]$.
\end{proof}

\noindent \textbf{Step 5$^\prime$:} \textit{The sequence $\textbf{P}_u^{(n)}$ converges a.s. to $\textbf{P}_u^*, \forall u \in \mathcal{U}$.}

\begin{proof}
    Using the notion of $l_\infty$-norm and denoting $||S^{(n)}||_\infty= \max_u \E[S_u^{(n)}]$, step 4$^\prime$ leads us to $||S^{(n+1)}||_\infty \leq ||S^{(n)}||_\infty$, which can be written as $||S^{(n+1)}||_\infty \leq \gamma ||S^{(n)}||_\infty$ with some  $\gamma \in [0, 1]$. Applying the argument recursively, we can get $||S^{(n)}||_\infty \leq \gamma^n ||S^{(0)}||_\infty$. By the Markov inequality, 
\begin{align*}
    \sum_{n=1}^{\infty} Pr(S_u^{(n)} & \geq \epsilon) \leq \sum_{n=1}^{\infty} \frac{\E[S_u^{(n)}]}{\epsilon} \leq \frac{1}{\epsilon}\sum_{n=1}^{\infty}||S^{(n)}||_\infty\\
    & \leq \frac{1}{\epsilon}\sum_{n=1}^{\infty} \gamma^n||S^{(0)}||_\infty = \frac{1}{\epsilon (1-\gamma)} ||S^{(0)}||_\infty,
\end{align*} which means that $\sum_{n=1}^{\infty} Pr(S_u^{(n)} \geq \epsilon) < \infty$ when $\epsilon > 0, \forall u$. Then, for all $u \in \mathcal{U}$, $$Pr\left(\limsup_{n \to \infty}\{\omega: S_u^{(n)} \geq \epsilon\}\right)=0$$ by the Borel-Cantelli lemma. This shows that sequence $S_u^{(n)}$ converges almost surely, thereby indicating the almost surely convergence of sequence $D_u^{(n)}$. As a result, sequence $\textbf{P}_u^{(n)}$ converges almost surely to $\textbf{P}_u^*, \forall u \in \mathcal{U}$, which then completes the proof.
\end{proof}

%\vspace{12pt}

\end{document}